\documentclass[a4paper]{article}

\pagestyle{plain}
\thispagestyle{plain}

 \topmargin=-0.5cm
 \oddsidemargin=0.5true cm
 \evensidemargin=0.5true cm
 \textwidth=15cm
 \textheight=23cm

\usepackage{amsmath}
\usepackage{amssymb}
\usepackage{amsthm}
\usepackage[dvipdf]{graphicx}

\newtheorem{theorem}{Theorem}
\newtheorem{lemma}{Lemma}
\newtheorem{corollary}{Corollary}


 \usepackage{color}

\newcommand{\YES}{\mathsf{yes}}
\newcommand{\NO}{\mathsf{no}}

\newcounter{one}
\setcounter{one}{1}

\newcounter{two}
\setcounter{two}{2}

\newcounter{three}
\setcounter{three}{3}

\newenvironment{listing}[1]{%
    \begin{list}{*}{%
    \settowidth{\labelwidth}{#1}%
    \setlength{\leftmargin}{\labelwidth}%
    \advance \leftmargin by 12pt
    \setlength{\itemsep}{0pt}%
    \setlength{\parsep}{0pt}%
    \setlength{\topsep}{0pt}%
    \setlength{\parskip}{0pt}%
    }%
    }{%
    \end{list}}



\newcommand{\seq}[1]{\langle #1 \rangle}
\newcommand{\dprime}{{\prime\prime}}

\newcommand{\pro}{\Pi}
\newcommand{\rrule}{\mathcal{A}}




\begin{document}
	\title{Reconfiguring spanning and induced subgraphs\thanks{This work is partially supported by JST ERATO Grant Number JPMJER1201, JST CREST Grant Number JPMJCR1402, and JSPS KAKENHI Grant Numbers JP16K00004 and JP17K12636, Japan. Research by Canadian authors is supported by the Natural Science and Engineering Research Council of Canada.}}

\author{%
	Tesshu Hanaka\footnote{Kyushu University, Fukuoka, Japan, \texttt{3EC15004S@s.kyushu-u.ac.jp}}\and
	Takehiro Ito\footnote{Tohoku University, Sendai, Japan, \texttt{takehiro@ecei.tohoku.ac.jp, haruka.mizuta.s4@dc.tohoku.ac.jp, a.suzuki@ecei.tohoku.ac.jp}}\and
	Haruka Mizuta\footnotemark[3] \and
	Benjamin Moore\footnote{University of Waterloo, Waterloo, Canada, \texttt{\{brmoore, nishi, v7subram, kvaidyan\}@uwaterloo.ca}} \and
	Naomi Nishimura\footnotemark[4] \and
	Vijay Subramanya\footnotemark[4] \and
	Akira Suzuki\footnotemark[3]\and
	Krishna Vaidyanathan\footnotemark[4]
}


\maketitle

\begin{abstract}
{\sc Subgraph reconfiguration} is a family of problems focusing on the reachability of the solution space in which feasible solutions are subgraphs, represented either as sets of vertices or sets of edges, satisfying a prescribed graph structure property. 
%
%
%
Although there has been previous work that can be categorized as {\sc subgraph reconfiguration}, most of the related results appear under the name of the property under consideration; for example, independent set, clique, and matching.
In this paper, we systematically clarify the complexity status of {\sc subgraph reconfiguration} with respect to graph structure properties. 
%
%
%

%
\end{abstract}

\section{Introduction}
	Combinatorial reconfiguration~\cite{IDHPSUU11,H13,N17} studies the reachability/connectivity of the solution space formed by feasible solutions of an instance of a search problem. 
	More specifically, consider a graph such that each node in the graph represents a feasible solution to an instance of a search problem $P$, and there is an edge between nodes representing any two feasible solutions that are ``adjacent,'' according to a prescribed {\em reconfiguration rule} $\rrule$; such a graph is called the {\em reconfiguration graph} for $P$ and $\rrule$.
	In the {\em reachability problem} for $P$ and $\rrule$, we are given {\em source} and {\em target} solutions to $P$, and the goal is to determine whether or not there is a path between the two corresponding nodes in the reconfiguration graph for $P$ and $\rrule$.
	We call a desired path a {\em  reconfiguration sequence} between source and target solutions, where a  {\em reconfiguration step} from one solution to another corresponds to an edge in the path.

	\subsection{Subgraph reconfiguration}
	In this paper, we use the term {\sc subgraph reconfiguration} to describe a family of reachability problems that take subgraphs (more accurately, vertex subsets or edge subsets of a given graph) as feasible solutions.
	Each of the individual problems in the family can be defined by specifying the node set and the edge set of a reconfiguration graph, as follows. 
	(We use the terms {\em node} for reconfiguration graphs and {\em vertex} for input graphs.)
		\medskip

		\begin{table}[t]
			\begin{center}
				\caption{Subgraph representations and variants}
				\begin{tabular}{|c|c|l|}
					\hline
					~Subgraph representations~ & ~Variant names~ & ~Known reachability problems~ \\
					\hline
					\hline
					~edge subset~ & ~edge~ & ~spanning tree \cite{IDHPSUU11}~ \\
					& & ~matching \cite{IDHPSUU11,M15}, and $b$-matching \cite{M15}~ \\
					\hline
					& & ~clique \cite{IOO15}~ \\
					& & ~independent set \cite{IDHPSUU11,KMM12}~ \\
					~vertex subset~ & ~induced~ & ~induced forest \cite{MNRSS17}~ \\
					& & ~induced bipartite \cite{MNRSS17}~ \\
					& & ~induced tree \cite{WYA16}~ \\
					\cline{2-3}
					& ~spanning~ & ~clique \cite{IOO15}~ \\
					\hline
				\end{tabular}
				\label{tab:variant}
			\end{center}
		\end{table}
		
		\begin{figure}[t]
			\centering
			\includegraphics[width=0.8\linewidth]{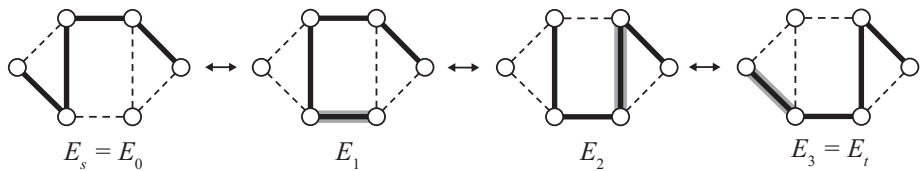}
			\vspace{-1em}
			\caption{A reconfiguration sequence $\seq{E_0,E_1,E_2,E_3}$ in the edge variant under the TJ rule (also under the TS rule) with the property ``a graph is a path,'' where the edges forming solutions are depicted by thick lines.}
			\label{fig:edge}
		\end{figure}

		\noindent
		{\bf Nodes of a reconfiguration graph}.
		The set of feasible solutions (i.e., subgraphs) can be defined in terms of a specified graph structure property $\pro$ which subgraphs must satisfy; for example, ``a graph is a tree,'' ``a graph is edgeless (an independent set),'' and so on.
		By the choice of how to represent subgraphs, each specific problem in the family can be categorized into one of three variants. (See also \tablename~\ref{tab:variant}.)
		If a subgraph is represented as an edge subset, which we will call the {\em edge variant}, then the subgraph formed (induced) by the edge subset must satisfy $\pro$.
		For example, \figurename~\ref{fig:edge} illustrates four subgraphs represented as edge subsets, where $\pro$ is ``a graph is a path.''
		On the other hand, if a subgraph is represented as a vertex subset, we can opt either to require that the subgraph induced by the vertex subset satisfies $\pro$ or that the subgraph induced by the vertex subset contains at least one spanning subgraph that satisfies $\pro$; we will refer to these as the {\em induced variant} and {\em spanning variant}, respectively.
		For example, if $\pro$ is ``a graph is a path,'' then in the induced variant, the vertex subset must induce a path, whereas in the spanning variant, the vertex subset is feasible if its induced subgraph contains at least one Hamiltonian path.
		Figure~\ref{fig:ind_spa} illustrates feasible vertex subsets of the induced variant and spanning variant.
		In the figure, the vertex subset $V^\prime_1$ is feasible in the spanning variant, but is not feasible in the induced variant, because it contains a spanning path but does not induce a path.
		As can be seen by this simple example, in the spanning variant, we need to pay attention to the additional complexity of finding a spanning subgraph and the complications resulting from the fact that the subgraph induced by the vertex subset may contain more than one spanning subgraph which satisfies $\pro$.
		\medskip
		
		\noindent
		{\bf Edges of a reconfiguration graph}.
		Since we represent a feasible solution by a set of vertices (or edges) in any variant, we can consider that tokens are placed on each vertex (resp., edge) in the feasible solution.
		Then, in this paper, we mainly deal with the two well-known reconfiguration rules, called the token-jumping (TJ)~\cite{KMM12} and token-sliding (TS) rules~\cite{BC09,HD05,KMM12}.  
		In the former, a token can move to any other vertex (edge) in a given graph, whereas in the latter it can move only to an adjacent vertex (adjacent edge, that is sharing a common vertex.)
		For example, \figurename~\ref{fig:edge} and \figurename~\ref{fig:ind_spa} illustrate reconfiguration sequences under the TJ rule for each variant.
		Note that the sequence in \figurename~\ref{fig:edge} can also be considered as a sequence under the TS rule.
		In the reconfiguration graph, two nodes are adjacent if and only if one of the two corresponding solutions can be obtained from the other one by a single move of one token that follows the specified reconfiguration rule. 
		Therefore, all nodes in a connected component of the reconfiguration graph represent subgraphs having the same number of vertices (edges).
		
		We note in passing that since in most cases we wish to retain the same number of vertices and/or edges, we rarely use the token-addition-and-removal (TAR) rule~\cite{IDHPSUU11,KMM12}, where we can add or remove a single token at a time, for {\sc subgraph reconfiguration} problems.

		\begin{figure}[t]
			\centering
			\includegraphics[width=0.75\linewidth]{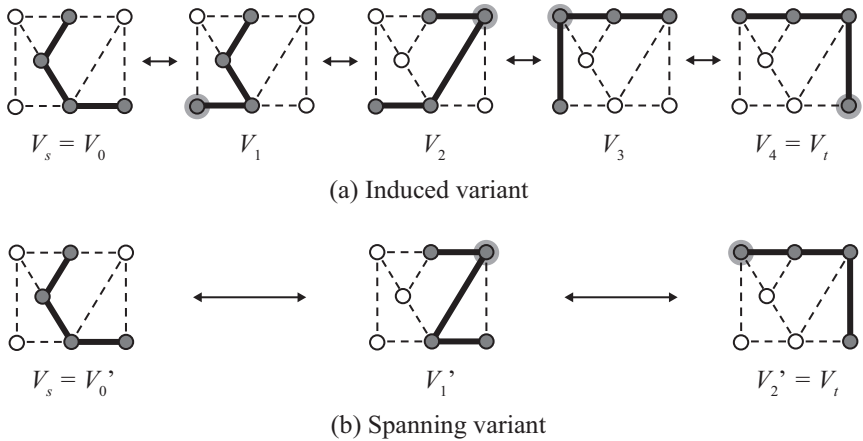}
			\vspace{-1em}
			\caption{Reconfiguration sequences $\seq{V_0,V_1,V_2,V_3,V_4}$ in the induced variant under the TJ rule and $\seq{V^\prime_0,V^\prime_1,V^\prime_2}$ in the spanning variant under the TJ rule with the property ``a graph is a path,'' where the vertices forming solutions are depicted by colored circles, and the subgraphs satisfying the property by thick lines.}
			\label{fig:ind_spa}
		\end{figure}
	
	\subsection{Previous work}
		Although there has been previous work that can be categorized as {\sc subgraph reconfiguration}, most of the related results appear under the name of the property $\pro$ under consideration.
		Accordingly, we can view reconfiguration of independent sets~\cite{IDHPSUU11,KMM12} as the induced variant of {\sc subgraph reconfiguration} such that the property $\pro$ is ``a graph is edgeless.''  
		Other examples can be found in \tablename~\ref{tab:variant}.
		We here explain only known results which are directly related to our contributions. 
		
		Reconfiguration of cliques can be seen as both the spanning and the induced variant; the problem is PSPACE-complete under any rule, even when restricted to perfect graphs~\cite{IOO15}.
		Indeed, for this problem, the rules TAR, TJ, and TS have all been shown to be equivalent from the viewpoint of polynomial-time solvability. 
		It is also known that reconfiguration of cliques can be solved in polynomial time for several well-known graph classes~\cite{IOO15}.
%
		
		
		Wasa et al.~\cite{WYA16} considered the induced variant under the TJ and TS rules with the property $\pro$ being ``a graph is a tree.'' 
		They showed that this variant under each of the TJ and TS rules is PSPACE-complete, and is W[1]-hard when parameterized by both the size of a solution and the length of a reconfiguration sequence.
		They also gave a fixed-parameter algorithm when parameterized by both the size of a solution and the maximum degree of an input graph, under both the TJ and TS rules.
		In closely related work, Mouawad et al.~\cite{MNRSS17} considered the induced variants of {\sc subgraph reconfiguration} under the TAR rule with the properties $\pro$ being either ``a graph is a forest'' or ``a graph is bipartite.''
		They showed that these variants are W[1]-hard when parameterized by the size of a solution plus the length of a reconfiguration sequence.

	\subsection{Our contributions}
		In this paper, we study the complexity of {\sc subgraph reconfiguration} under the TJ and TS rules.
		(Our results are summarized in Table~\ref{tab:result}, together with known results, where an {\em $(i,j)$-biclique} is a complete bipartite graph with the bipartition of $i$ vertices and $j$ vertices.) 
		As mentioned above, because we consider the TJ and TS rules, it suffices to deal with subgraphs having the same number of vertices or edges. 
		Subgraphs of the same size may be isomorphic for certain properties $\pro$, such as  ``a graph is a path'' and ``a graph is a clique,''
                because there is only one choice of a path or a clique of a particular size.  
		On the other hand, for the property ``a graph is a tree,'' there are several choices of trees of a particular size.
		(We will show an example in Section~\ref{sec:edge} with \figurename~\ref{fig:edge_tree}.)
		
 		As shown in Table 2, we systematically clarify the complexity of {\sc subgraph reconfiguration} for several fundamental graph properties.
		In particular, we show that the edge variant under the TJ rule is computationally intractable for the property ``a graph is a path'' but 
tractable for the property ``a graph is a tree.''
		This implies that the computational (in)tractability does not follow directly from the inclusion relationship of graph classes required as the properties $\pro$;
one possible explanation is that the path property implies a specific graph, whereas the tree property allows several choices of trees, making the problem easier.

		\begin{table}[t]
			\begin{center}
				\caption{Previous and new results}
				\begin{tabular}{|c||c|c|c|}
					\hline
					{\bf Property $\pro$} & {\bf Edge}  & {\bf Induced} & {\bf Spanning} \\
					 & {\bf variant} & {\bf variant} & {\bf variant} \\
					\hline
					\hline
					{\bf path} & NP-hard (TJ) & PSPACE-c. (TJ, TS) & PSPACE-c. (TJ, TS) \\
					& [Theorem~\ref{the:edgepath}] & [Theorems~\ref{the:path_tj},~\ref{the:pathcycle_ts}] & [Theorems~\ref{the:path_tj},~\ref{the:pathcycle_ts}]  \\
					\hline
					{\bf cycle} & P (TJ, TS) & PSPACE-c. (TJ, TS) & PSPACE-c. (TJ, TS)  \\
					& [Theorem~\ref{the:edgecycle}] & [Theorems~\ref{the:cycle_tj},~\ref{the:pathcycle_ts}] & [Theorems~\ref{the:cycle_tj},~\ref{the:pathcycle_ts}] \\ 
					\hline
					{\bf tree} & P (TJ)  & ~PSPACE-c. (TJ, TS)~ & P (TJ)  \\
					&  [Theorem~\ref{the:edgetree}] & \cite{WYA16} & PSPACE-c. (TS) \\
					& & & [Theorems~\ref{the:spanningtreetj},~\ref{the:spanningtreets}] \\ 
					\hline
					{\bf $(i,j)$-biclique} & P (TJ, TS) & PSPACE-c. for $i=j$ (TJ)~ & ~NP-hard for $i=j$ (TJ)~\\
					& [Theorem~\ref{the:edgebiclique}] & PSPACE-c. for fixed $i$ (TJ) & P for fixed $i$ (TJ) \\
					& & [Corollary~\ref{col:inducedbiclique},~Theorem~\ref{the:inducedbiclique}] & [Theorems~\ref{the:spanningbiclique},~\ref{the:spanningbicliquesolve}] \\
					\hline
					{\bf clique} & P (TJ, TS) & PSPACE-c. (TJ, TS) & PSPACE-c. (TJ, TS) \\
					& [Theorem~\ref{the:edgeclique}]  & \cite{IOO15} & \cite{IOO15}\\ 
					\hline
					~{\bf diameter}~ & & PSPACE-c. (TS) & PSPACE-c. (TS)  \\
					~{\bf two}~ &  & [Theorem~\ref{the:diamtwo}] & [Theorem~\ref{the:diamtwo}]  \\
					\hline      
					\hline  
					{\bf any} & XP for solution & XP for solution & ~XP for solution~ \\
					{\bf property}  &  size (TJ, TS) & size (TJ, TS) & size (TJ, TS) \\
					& [Theorem~\ref{the:xp}] & [Theorem~\ref{the:xp}] & [Theorem~\ref{the:xp}] \\
					\hline
				\end{tabular}
				\label{tab:result}
			\end{center}
		\end{table}


\subsection{Preliminaries}
	Although we assume throughout the paper that an input graph $G$ is simple, all our algorithms can be easily extended to graphs having multiple edges. 
	We denote by $(G, V_s, V_t)$ an instance of a spanning variant or an induced variant whose input graph is $G$ and source and target solutions are vertex subsets $V_s$ and $V_t$ of $G$. 
	Similarly, we denote by $(G, E_s, E_t)$ an instance of the edge variant.
	We may assume without loss of generality that $|V_s| = |V_t|$ holds for the spanning and induced variants, and $|E_s| = |E_t|$ holds for the edge variant; otherwise, the answer is clearly $\NO$ since under both the TJ and TS rules, all solutions must be of the same size.

\section{General algorithm} \label{sec:general}
	In this section, we give a general XP algorithm when the size of a solution (that is, the size of a vertex or edge subset that represents a subgraph) is taken as the parameter.
	For notational convenience, we simply use {\em element} to represent a vertex (or an edge) for the spanning and induced variants (resp., the edge variant), and {\em candidate} to represent a set of elements (which does not necessarily satisfy the property $\pro$.) 
	Furthermore, we define the {\em size} of a given graph as the number of elements in the graph.
%
%
	\begin{theorem}\label{the:xp}
		Let $\pro$ be any graph structure property, and let $f(k)$ denote the time to check if a candidate of size $k$ satisfies $\pro$.
		Then, all of the spanning, induced, and edge variants under the TJ or TS rules can be solved in time $O(n^{k+1}k+n^k f(k))$, where $n$ is the size of a given graph and $k$ is the size of a source {\rm (}and target{\rm )} solution.
		Furthermore, a shortest reconfiguration sequence between source and target solutions can be found in the same time bound, if it exists.
	\end{theorem}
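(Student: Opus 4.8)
The plan is to build the reconfiguration graph explicitly and run a breadth-first search on it. The whole point of the bound is that, although the reconfiguration graph is in general enormous, for a \emph{fixed} solution size $k$ it has only polynomially many nodes, each of polynomially bounded degree, so an exhaustive search is affordable.

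First I would unify the three variants and the two rules at a single level of abstraction. In every case a solution is a size-$k$ set of elements (vertices for the induced and spanning variants, edges for the edge variant), and ``feasibility''---whether the induced subgraph satisfies $\pro$ in the induced and edge variants, or contains a spanning subgraph satisfying $\pro$ in the spanning variant---is by hypothesis decidable in time $f(k)$. The reconfiguration rule enters only through adjacency: under either TJ or TS, two solutions are adjacent exactly when one is obtained from the other by moving a single token, i.e.\ by deleting one of the $k$ current elements and inserting a new one (any element for TJ, an adjacent element for TS). Hence a single object---the graph whose nodes are the feasible size-$k$ sets and whose edges are the single-token moves---captures all six combinations, and the argument need only be carried out once.

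Second I would bound the two relevant quantities and assemble the algorithm. The number of candidate sets is $\binom{n}{k}\le n^k$, so the reconfiguration graph has at most $n^k$ nodes; enumerating all candidates and testing feasibility once per candidate costs $O(n^k f(k))$ and yields the node set. Each node has at most $O(nk)$ neighbors, since a token move chooses one of $k$ elements to delete and one of at most $n$ elements to insert (for TS the insertion is merely further restricted). To obtain the edges, I would, for each feasible solution, generate its $O(nk)$ single-move neighbors and test each for membership in the precomputed set of feasible solutions, kept in a dictionary keyed by a canonical encoding of the size-$k$ sets so that each neighbor can be located efficiently; the total cost of discovering all incidences is $O(n^{k+1}k)$. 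A breadth-first search from the source solution then decides reachability to the target and, by the standard shortest-path property of BFS, returns a shortest path, hence a shortest reconfiguration sequence, in time linear in the size of the reconfiguration graph. Summing the two phases gives the claimed $O(n^{k+1}k+n^k f(k))$.

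The main point requiring care---rather than a genuine obstacle---is the bookkeeping that keeps the feasibility test out of the inner loop: feasibility must be evaluated once per candidate (contributing the $n^k f(k)$ term) and thereafter only looked up while exploring neighbors, so that neighbor exploration contributes the cheaper $n^{k+1}k$ term and not an $f(k)$ factor per edge. Beyond this, checking that the candidate bound $n^k$ and the degree bound $O(nk)$ hold uniformly across the edge variant (where elements are edges and $n$ counts edges) and across both rules is routine, as is verifying that BFS on this graph produces a genuinely shortest reconfiguration sequence.
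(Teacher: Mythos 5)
Your proposal is correct and follows essentially the same route as the paper's own proof: enumerate the $O(n^k)$ candidates, test each for feasibility in $f(k)$ time, exploit the $O(nk)$ degree bound to build the edge set in $O(n^{k+1}k)$ time, and run BFS for reachability and a shortest sequence. Your added care about keeping the feasibility test out of the neighbor-exploration loop (via lookup in a precomputed dictionary) is an implementation detail the paper leaves implicit, but it is the same argument.
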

	\begin{proof}
		Our claim is that the reconfiguration graph can be constructed in the stated time.
		Since a given source solution is of size $k$, it suffices to deal only with candidates of size exactly $k$.
		For a given graph, the total number of possible candidates of size $k$ is $O(n^k)$.
		For each candidate, we can check in time $f(k)$ whether it satisfies $\pro$.
		Therefore, we can construct the node set of the reconfiguration graph in time $O(n^{k}f(k))$.
		We then obtain the edge set of the reconfiguration graph.
		Notice that each node in the reconfiguration graph has $O(nk)$ adjacent nodes, because we can replace only a single element at a time. 
		Therefore, we can find all pairs of adjacent nodes in time $O(n^{k+1}k)$.
		
		In this way, we can construct the reconfiguration graph in time $O(n^{k+1}k+n^k f(k))$ in total.
		The reconfiguration graph consists of $O(n^k)$ nodes and $O(n^{k+1}k)$ edges.
		Therefore, by breadth-first search starting from the node representing a given source solution, we can determine in time $O(n^{k+1}k)$ whether or not there exists a reconfiguration sequence between two nodes representing the source and target solutions.
		Notice that if a desired reconfiguration sequence exists, then the breadth-first search finds a shortest one.  
	\end{proof}

\section{Edge variants} \label{sec:edge}
	In this section, we study the edge variant of {\sc subgraph reconfiguration} for the properties of being paths, cycles, cliques, bicliques, and trees.
	
		We first consider the property  ``a graph is a path'' under the TJ rule. 
		\begin{theorem}\label{the:edgepath}
			The edge variant of {\sc subgraph reconfiguration} under the TJ rule is NP-hard for the property ``a graph is a path.''
		\end{theorem}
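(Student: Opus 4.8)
The plan is to give a polynomial-time reduction from the \textbf{Hamiltonian Path} problem (more precisely, its $s$--$t$ version: given a graph $G$ and two vertices $s,t$, decide whether $G$ has a Hamiltonian path from $s$ to $t$), which is NP-complete. Given such an instance, I would build an input graph $G'$ together with source and target edge sets $E_s$ and $E_t$ so that $(G',E_s,E_t)$ is a $\YES$-instance of the edge variant under the TJ rule with $\pro$ being ``a graph is a path'' if and only if $G$ has a Hamiltonian $s$--$t$ path.

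Before describing the gadgets, I would record one structural observation that governs every move and is the reason a geometric layout can constrain the solution at all: when $\pro$ is ``a graph is a path,'' a single TJ step that deletes an edge $e$ and inserts an edge $f\neq e$ can yield a path again only if $f$ is incident to the current solution. Indeed, since a path on $k$ edges spans exactly $k+1$ vertices, inserting an edge on a brand-new vertex would leave $k+2$ vertices on $k$ edges, which cannot be a single path. If $e$ is an interior edge, deleting it splits the solution into two subpaths and the only edges reconnecting them into one path join an endpoint of one to an endpoint of the other; if $e$ is an end edge, its free endpoint drops out and the only way to restore a length-$k$ path is to regrow at one of the two current ends. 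In every case the inserted edge touches the current path, so although TJ nominally lets an edge jump anywhere, the path constraint forces each intermediate solution to crawl \emph{locally}, like a fixed-length worm. This locality is exactly what lets a corridor/pocket layout channel the solution.

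The construction would embed $G$ between two auxiliary regions attached at $s$ and at $t$, and fix the worm length to be $|V(G)|-1$, so that whenever the worm lies entirely inside $G$ it is \emph{exactly} a Hamiltonian path of $G$. I would place $E_s$ inside the first region and $E_t$ inside the second, designed so that both always exist (regardless of whether $G$ is Hamiltonian) and so that any reconfiguration from $E_s$ to $E_t$ is forced to move the worm from the first region across $G$ into the second. The forward direction is then routine: from a Hamiltonian $s$--$t$ path of $G$ one reads off an explicit trajectory that advances the worm's head into $t$'s region while retracting its tail out of $s$'s region, yielding a reconfiguration sequence $\seq{E_s,\dots,E_t}$.

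The hard part — and the place where a naive corridor design fails — is the reverse direction: forcing the worm to actually lay itself out along a \emph{complete} Hamiltonian path rather than cheating. Because a worm may straddle the two regions, it could in principle cross $G$ along some \emph{short} $s$--$t$ path while parking its excess length in the attached regions, thereby crossing without covering $V(G)$; then a non-Hamiltonian $G$ would be wrongly accepted, so the reduction would be incorrect. The crux of the proof is therefore a gadget that makes every short or straddling crossing unproductive — for instance a one-way or dead-end mechanism guaranteeing that the worm can ``settle'' into the target region only after it has, at some step, been entirely contained in $G$ (hence a Hamiltonian path), while leaving the genuine Hamiltonian trajectory intact. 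Verifying that this gadget simultaneously (\one) admits the Hamiltonian crossing, (\two) blocks all short/straddling alternatives, and (\three) still allows valid source and target solutions to exist for every input $G$, is where I would expect to spend essentially all of the effort; once it is in place, the equivalence with Hamiltonicity — and hence NP-hardness — follows.
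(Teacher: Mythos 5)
Your proposal sets up the right framework---a reduction from Hamiltonian $s$--$t$ path, the ``worm'' locality observation (which is correct for solutions with at least two edges; note it fails for a single-edge solution, which can jump anywhere), and a correct identification of the central danger, namely a worm that straddles both attached regions and crosses $G$ along a short path. But it stops exactly where the mathematical content of the theorem has to live: the anti-straddling gadget is never constructed, only postulated (``a one-way or dead-end mechanism''). As written there is no reduction to verify---no graph $G'$, no $E_s$, $E_t$, no correctness argument---so this is a genuine gap, not a stylistic omission. The entire difficulty of the theorem is compressed into the sentence promising that such a gadget exists.

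For comparison, the paper closes this gap with a purely static construction; no one-way mechanism is needed. Both escape regions are long dead-end paths $P_s$ and $P_t$ of $n+1$ edges each ($n=|V(G)|$) that share a \emph{single} attachment vertex $x$, which is joined to $s$ via a two-edge connector $e_1=xv$, $e_2=vs$; a pendant edge $e_3=tw$ hangs off $t$. The solutions are $E_s=E(P_s)\cup\{e_1,e_2\}$ and $E_t=E(P_t)\cup\{e_1,e_2\}$, so every solution has $n+3$ edges. Straddling is then killed by counting and the degree bound of a path at $x$: consider the \emph{first} solution $E_q$ containing an edge of $P_t$. Connectivity forces that edge to be $t_{n+1}x$ (all other $P_t$ edges are isolated from the rest of $G'$). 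If $E_q$ also kept any edge of $P_s$, connectivity would force $s_{n+1}x\in E_q$, and then $x$ already has degree two, so the worm is confined to $E(P_s)\cup\{t_{n+1}x\}$---only $n+2$ edges, fewer than $n+3$, a contradiction. Hence $E_q$ has vacated $P_s$ entirely, its remaining $n+2$ edges lie in $E(G)\cup\{e_1,e_2,e_3\}$, and since $G$ has only $n$ vertices these must be exactly $e_1,e_2,e_3$ plus $n-1$ edges forming a Hamiltonian $s$--$t$ path of $G$. Note also that the paper never needs the worm to lie \emph{entirely} inside $G$ (its length is $n+3$, not $n-1$); the Hamiltonian path appears as the portion of the worm outside the gadget at the critical step. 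If you want to salvage your plan, this shared-vertex-plus-length-counting trick is the concrete realization of the gadget your proof is missing.
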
 
		\begin{proof}
			We give a polynomial-time reduction from the {\sc Hamiltonian path} problem.
			Recall that a {\em Hamiltonian path} in a graph $G$ is a path that visits each vertex of $G$ exactly once.
			Given a graph $G$ and two vertices $s,t \in V(G)$ of $G$, the NP-complete problem {\sc Hamiltonian path} is to determine whether or not $G$ has a Hamiltonian path which starts from $s$ and ends in $t$~\cite{GJ79}.
			
			\begin{figure}[t]
				\centering
				\includegraphics[width=0.8\linewidth]{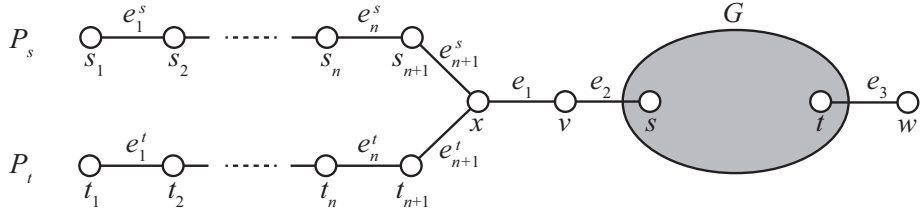}
				\caption{Reduction to the edge variant under the TJ rule for the property ``a graph is a path.''}
				\label{fig:edge_path}
			\end{figure}
			
			For an instance $(G,s,t)$ of {\sc Hamiltonian path}, we construct a corresponding instance $(G^\prime,E_s,E_t)$ of our problem, as follows.
			(See also \figurename~\ref{fig:edge_path}.)
			Let $n=|V(G)|$.
			We first add two new vertices $v$ and $x$ to $G$ with two new edges $e_1=xv$ and $e_2=vs$.  
			We then add two paths $P_s = \seq{s_1,s_2,\ldots,s_{n+1},x}$ and $P_t=\seq{t_1,t_2,\ldots,t_{n+1},x}$, where $s_1, s_2, \ldots,s_{n+1}$ and $t_1, t_2, \ldots,t_{n+1}$ are distinct new vertices.
			Each of $P_s$ and $P_t$ consists of $n+1$ edges; we denote by $e^s_1,e^s_2,\ldots,e^s_{n+1}$ the edges $s_1s_2,\ s_2s_3,\ \ldots, s_{n+1}x$ in $P_s$, respectively, and by $e^t_1,e^t_2,\ldots,e^t_{n+1}$ the edges $t_1t_2,\ t_2t_3,\ \ldots,t_{n+1}x$ in $P_t$, respectively.
			We finally add a new vertex $w$ with an edge $e_3 = tw$, completing the construction of $G^\prime$.
			We then set $E_s = \{ e^s_1,e^s_2,\ldots,e^s_{n+1},e_1,e_2\}$ and $E_t = \{e^t_1,e^t_2,\ldots,e^t_{n+1},e_1,e_2\}$; these edge subsets clearly form paths in $G^\prime$.
			We have thus constructed our corresponding instance $(G^\prime,E_s,E_t)$ in polynomial time.
			
			We now prove that an instance $(G,s,t)$ of {\sc Hamiltonian path} is a $\YES$-instance if and only if the corresponding instance $(G^\prime,E_s,E_t)$
			is a $\YES$-instance.  
			
			To prove the only-if direction, we first suppose that $G$ has a Hamiltonian path $P$ starting from $s$ and ending in $t$.
			Then, we construct an actual reconfiguration sequence from $E_s$ to $E_t$ using the edges in $P$.
			Notice that $P$ consists of $n-1$ edges.
			Thus, we first move the $n-1$ edges $e^s_1,e^s_2,\ldots,e^s_{n-1}$ in $E_s$ to the edges in $P$ one by one, and then move $e^s_n$ to $e_3$. 
			Next, we move $e^s_{n+1}$ to $e^t_{n+1}$, and then move the edges in $E(P) \cup \{e_3\}$ to $e^t_{n},e^t_{n-1},\ldots,e^t_1$ one by one.
			By the construction of $G^\prime$, we know that each of the intermediate edge subsets forms a path in $G^\prime$, as required.
			
			We now prove the if direction by 
			supposing that there exists a reconfiguration sequence $\seq{E_s=E_0,E_1,\ldots,E_\ell=E_t}$.
			Let $E_q$ be the first edge subset in the sequence such that $E(P_t) \cap E_q \neq \emptyset$;
			we claim that $E_q$ contains a Hamiltonian path in $G$. 
			First, notice that the edge in $E(P_t) \cap E_q$ is $e^t_{n+1}$; otherwise the subgraph formed by $E_q$ is disconnected.
			Since $|E_q|=|E_s|=n+3$ and $|E(P_s)| = n+1$, we can observe that $E_q$ contains no edge in $P_s$; otherwise the degree of $x$ would be three, or $E_q$ would form a disconnected subgraph.
			Therefore, the $n+2$ edges in $E_q \setminus \{ e^t_{n+1} \}$ must be chosen from $E(G) \cup \{ e_1,e_2,e_3 \}$.
			Since $|V(G)|=n$ and $E_q$ must form a path in $G^\prime$, we know that $E_q \setminus \{ e^t_{n+1} \}$ consists of $e_1,e_2,e_3$ and $n-1$ edges in $G$.
			Thus, $E_q \setminus \{ e^t_{n+1}, e_1, e_2, e_3 \}$ forms a Hamiltonian path in $G$ starting from $s$ and ending in $t$, as required.
		\end{proof}

		We now consider the property ``a graph is a cycle,'' as follows. 
		\begin{theorem}\label{the:edgecycle}
			The edge variant of {\sc subgraph reconfiguration} under each of the TJ and TS rules can be solved in linear time for the property ``a graph is a cycle.''
		\end{theorem}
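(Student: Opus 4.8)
The plan is to show that the property ``a graph is a cycle'' is so rigid that, when the input graph $G$ is simple, the reconfiguration graph contains no edges at all; consequently the reachability question collapses to testing whether the source and target edge subsets are identical.

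First I would record the elementary fact that a single move under either the TJ or the TS rule replaces exactly one token, so that any two adjacent feasible solutions $E$ and $E'$ satisfy $|E \setminus E'| = |E' \setminus E| = 1$. Since a cycle with $k$ edges spans exactly $k$ vertices, the only way a $k$-edge cycle could be adjacent to another feasible solution is to delete one edge $e$ and insert a different edge $e'$ while keeping the edge set a cycle.

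The key step, which I expect to be the main (and essentially the only) obstacle, is to prove that this is impossible on a simple graph. Suppose $E' = (E \setminus \{e\}) \cup \{e'\}$ with $e \neq e'$, and both $E$ and $E'$ form cycles. Deleting $e = uv$ from the cycle $E$ leaves a path on the same $k$ vertices whose only degree-one vertices are $u$ and $v$. For $E'$ to be a cycle, every vertex must again have degree two and the graph must stay connected, so the inserted edge $e'$ must join exactly the two degree-one vertices $u$ and $v$; it cannot introduce a new vertex, since that vertex would have degree one. But then $e'$ also joins $u$ and $v$, and because $G$ is simple this forces $e' = e$, a contradiction. Hence no two distinct $k$-edge cycles differ in a single edge, so the reconfiguration graph is edgeless under the TJ rule; and since every TS move is also a TJ move, it is edgeless under the TS rule as well.

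It then follows immediately that $(G, E_s, E_t)$ is a $\YES$-instance if and only if $E_s = E_t$, which I can decide in linear time by marking the edges of $E_s$ and scanning the edges of $E_t$ (no nonempty reconfiguration sequence is ever possible). The delicate point is the degree-and-connectivity argument above, together with the observation that the simplicity of $G$ is precisely what forbids the single-edge swap; on a multigraph one would additionally have to consider swaps between parallel edges joining the same pair of endpoints, but even then the decision problem remains trivial to solve, consistent with the remark that the algorithms extend to graphs with multiple edges.
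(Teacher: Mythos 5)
Your proposal is correct and follows essentially the same approach as the paper: both argue that the reconfiguration graph is edgeless, because removing an edge $uv$ from a cycle leaves a path whose endpoints $u$ and $v$ can only be rejoined by re-inserting the same edge (using simplicity of $G$), so the answer is $\YES$ if and only if $E_s = E_t$, which is checkable in linear time. Your additional remarks (ruling out an inserted edge touching a new vertex, and the multigraph case) are fine elaborations of the same argument.
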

		\begin{proof}
			Let $(G,E_s,E_t)$ be a given instance.
			We claim that the reconfiguration graph is edgeless, in other words, no feasible solution can be transformed at all. 
			Then, the answer is $\YES$ if and only if $E_s = E_t$ holds; this condition can be checked in linear time.

			Let $E'$ be any feasible solution of $G$, and consider a replacement of an edge $e^- \in E'$ with an edge $e^+$ other than $e^-$. 
			Let $u,v$ be the endpoints of $e^-$.
			When we remove $e^-$ from $E'$, the resulting edge subset $E' \setminus \{e\}$ forms a path whose ends are $u$ and $v$.
			Then, to ensure that the candidate forms a cycle, we can choose only $e^- = uv$ as $e^+$. 
			This contradicts the assumption that $e^+ \neq e^-$. 
		\end{proof}
	
		The same arguments hold for the property ``a graph is a clique,'' and we obtain the following theorem. 
		We note that, for this property, both induced and spanning variants (i.e., when solutions are represented by vertex subsets) are PSPACE-complete under any rule~\cite{IOO15}.
		\begin{theorem}\label{the:edgeclique}
			The edge variant of {\sc subgraph reconfiguration} under each of the TJ and TS rules can be solved in linear time for the property ``a graph is a clique.''
		\end{theorem}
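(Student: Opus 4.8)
The plan is to mirror the proof of Theorem~\ref{the:edgecycle}: I would argue that, outside a degenerate case, the reconfiguration graph restricted to clique solutions of a fixed size is edgeless, so that reachability reduces to checking whether $E_s = E_t$. First, since both $E_s$ and $E_t$ form cliques and $|E_s| = |E_t|$, they realize cliques on the same number of vertices $k$, where $\binom{k}{2} = |E_s|$; this is forced because $\binom{k}{2}$ is strictly increasing in $k$.

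The main case is $k \ge 3$. Let $E'$ be any clique solution, on vertex set $S$ with $|S| = k$, and consider replacing $e^- = uv \in E'$ by some $e^+ \neq e^-$. When $k \ge 3$, deleting $e^-$ leaves every vertex of $S$ (in particular $u$ and $v$) with positive degree, so the vertex set is still $S$ and $E' \setminus \{e^-\}$ is exactly $K_k$ with the single non-edge $\{u,v\}$. To recover a clique I would add $e^+$ and check the two possibilities. If both endpoints of $e^+$ lie in $S$, then the only missing edge is $uv$, forcing $e^+ = uv = e^-$, a contradiction. If instead $e^+$ has an endpoint $w \notin S$, then the vertex set grows to $S \cup \{w\}$ of size $k+1$, so a clique would need $\binom{k+1}{2} = \binom{k}{2} + k$ edges, whereas $E'$ has only $\binom{k}{2}$ edges, again a contradiction. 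Hence no token can move, the reconfiguration graph is edgeless, and the answer is $\YES$ if and only if $E_s = E_t$, which is testable in linear time.

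The one place where cliques genuinely differ from cycles---and what I expect to be the only real obstacle---is the degenerate case $k \le 2$, since the smallest clique carrying an edge is $K_2$, whereas the smallest cycle already has three edges. For $k \le 1$ both $E_s$ and $E_t$ are empty and the answer is trivially $\YES$. For $k = 2$ each solution is a single edge, and a single edge can always be replaced by another single edge, so the edgeless argument fails and I would handle this case separately. Under the TJ rule any edge may be moved to any other edge, so every single edge is reachable from every other and the answer is always $\YES$. Under the TS rule a token on an edge can only slide to an edge sharing an endpoint, so reachability among single edges coincides with adjacency in the line graph of $G$; two edges are therefore interchangeable if and only if they lie in the same connected component of $G$, which a single graph search decides in linear time. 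Combining the cases yields a linear-time algorithm under both rules.
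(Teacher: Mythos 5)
Your proposal is correct, and its core is exactly the paper's approach: the paper proves Theorem~\ref{the:edgeclique} with a one-line proof-by-analogy, asserting that ``the same arguments hold'' as in Theorem~\ref{the:edgecycle}, i.e., that the reconfiguration graph is edgeless and hence the answer is $\YES$ if and only if $E_s = E_t$. Your main case $k \ge 3$ is precisely that argument, fleshed out with the vertex-set and edge-count bookkeeping (the candidate has only $\binom{k}{2}$ edges, so no vertex outside $S$ can be introduced, and inside $S$ the only missing edge is $uv$).

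Where you genuinely depart from the paper is the degenerate case $k = 2$, and here your extra care pays off: the paper's blanket claim is literally false for single-edge solutions. Removing the unique edge of a $K_2$ leaves the empty edge set, and adding back \emph{any} edge again forms a $K_2$, so size-one solutions are mutually reachable---always under TJ, and within a connected component under TS---even when $E_s \neq E_t$. The cycle proof has no analogous case because a cycle needs at least three edges, so the ``same arguments'' transfer only for cliques with at least three edges; the paper silently skips over this. Your split (TJ: always $\YES$; TS: check whether the two edges lie in one connected component, a single linear-time search) repairs the gap while preserving the linear-time bound, so your proof is not only correct but strictly more complete than the paper's.
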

%

		We next consider the property ``a graph is an $(i,j)$-biclique,'' as follows. 
		\begin{theorem}\label{the:edgebiclique}
		 	The edge variant of {\sc subgraph reconfiguration} under each of the TJ and TS rules can be solved in polynomial time for the property ``a graph is an $(i,j)$-biclique'' for any pair of positive integers $i$ and $j$. 
		\end{theorem}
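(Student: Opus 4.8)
The plan is to prove the theorem by a case analysis on $\min\{i,j\}$, showing that in each case the reconfiguration graph is either edgeless or has an easily checkable connectivity structure. Throughout I would assume without loss of generality that $i \le j$, and recall that every feasible solution consists of exactly $ij$ edges, so in particular $|E_s| = |E_t| = ij$.

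First I would treat the main case $i \ge 2$ (hence $j \ge 2$) and claim that, exactly as in the cycle case (Theorem~\ref{the:edgecycle}), the reconfiguration graph is edgeless; thus the answer is $\YES$ if and only if $E_s = E_t$. To see this, let $E'$ form an $(i,j)$-biclique with vertex classes $A$ and $B$, $|A|=i$ and $|B|=j$, and remove an edge $e^- = ab$ with $a \in A$, $b \in B$. Because $i,j \ge 2$, every vertex of $A \cup B$ still has positive degree in $E' \setminus \{e^-\}$, so this graph already contains all $i+j$ vertices of the biclique. Any added edge $e^+ \ne e^-$ with an endpoint outside $A \cup B$ would yield a graph on at least $i+j+1$ vertices of positive degree, while any $e^+$ lying inside a single class, say $a_1a_2$ with $a_1,a_2 \in A$, would, using a common neighbour $b' \in B \setminus \{b\}$ (which exists since $j \ge 2$), create a triangle; in neither case is the result an $(i,j)$-biclique. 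Hence the only edge whose insertion restores a biclique is $e^-$ itself, which is forbidden, and no token can move.

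Next I would handle the star case $i = 1$ with $j \ge 3$, where the $(1,j)$-biclique is the star $K_{1,j}$ whose centre is the unique vertex of degree $j \ge 2$. The two key observations are: (i) the centre is invariant along any reconfiguration step, since two adjacent solutions share $j-1 \ge 2$ edges, and if their centres $c',c''$ differed then each shared edge would have to equal $c'c''$, impossible for $j-1 \ge 2$ distinct edges; and (ii) for a fixed centre $c$, solutions correspond exactly to $j$-subsets of $N_G(c)$, and any single-leaf exchange $c\ell \to c\ell'$ is realizable both under TJ and under TS, the latter because the two edges share $c$. Since one can transform any $j$-subset of $N_G(c)$ into any other by exchanging one element at a time, all solutions with a common centre are mutually reachable. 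Therefore, under either rule, the instance is a $\YES$-instance if and only if $E_s$ and $E_t$ are stars with the same centre, which is checkable in linear time.

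Finally, the boundary cases $i=j=1$ and $(i,j)=(1,2)$ have solution size $ij \le 2$, a fixed constant, and so are dispatched directly by the general algorithm of Theorem~\ref{the:xp}, which then runs in polynomial (indeed $O(n^3)$) time. The main obstacle I anticipate is precisely this boundary: the centre-invariance argument of the star case requires $j \ge 3$, and when $j = 2$ the star $K_{1,2}$ is a path on three vertices whose centre can migrate under a single token move, so the clean ``same-centre'' criterion breaks down and a separate, constant-size treatment is needed. Verifying the triangle/degree dichotomy of the first case and the TJ/TS equivalence of the star case are the remaining points demanding care.
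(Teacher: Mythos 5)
Your proposal is correct and follows essentially the same approach as the paper's proof: the identical three-case split (constant-size solutions for $i=1$, $j\le 2$ dispatched by Theorem~\ref{the:xp}; an edgeless reconfiguration graph for $i,j\ge 2$; and the same-center criterion for stars with $j\ge 3$). Your low-level justifications differ only cosmetically, e.g., proving center invariance via the $j-1\ge 2$ shared edges rather than via the paper's edge-removal argument, and spelling out the degree/triangle dichotomy that the paper compresses into the observation that deleting an edge preserves the bipartition.
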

		\begin{proof}
		We may assume without loss of generality that $i \le j$ holds. 
		We prove the theorem in the following three cases: 
{\bf Case 1:} $i = 1$ and $j \leq 2$;
{\bf Case 2:} $i,j \geq 2$; and
{\bf Case 3:} $i = 1$ and $j \geq 3$. 

		We first consider {\bf Case 1}, which is the easiest case.
		In this case, any $(1,j)$-biclique has at most two edges. 
		Therefore, by Theorem~\ref{the:xp} we can conclude that this case is solvable in polynomial time.
		
		We then consider {\bf Case 2}.
		We show that $(G,E_s,E_t)$ is a $\YES$-instance if and only if $E_s=E_t$ holds.
		To do so, we claim that the reconfiguration graph is edgeless, in other words, no feasible solution can be transformed at all. 
		To see this, because $i, j \ge 2$, notice that the removal of any edge $e$ in an $(i,j)$-biclique results in a bipartite graph with the same bipartition of $i$ vertices and $j$ vertices. 
		Therefore, to obtain an $(i,j)$-biclique by adding a single edge, we must add back the same edge $e$. 
	
		We finally deal with {\bf Case 3}.
		Notice that a $(1,j)$-biclique is a star with $j$ leaves, and its center vertex is of degree $j \ge 3$.
		Then, we claim that $(G,E_s,E_t)$ is a $\YES$-instance if and only if the center vertices of stars represented by $E_s$ and $E_t$ are the same.
			The if direction clearly holds, because we can always move edges in $E_s \setminus E_t$ into ones in $E_t \setminus E_s$ one by one.
			We thus prove the only-if direction; indeed, we prove the contrapositive, that is, the answer is $\NO$ if the center vertices of stars represented by $E_s$ and $E_t$ are different.
			Consider such a star $T_s$ formed by $E_s$. 
			Since $T_s$ has $j \ge 3$ leaves, the removal of any edge in $E_s$ results in a star having $j-1 \ge 2$ leaves. 
			Therefore, to ensure that each intermediate solution is a star with $j$ leaves, we can add only an edge of $G$ which is incident to the center of $T_s$.
			Thus, we cannot change the center vertex.
%
		\end{proof}
	
		In this way, we have proved Theorem~\ref{the:edgebiclique}.
		Although {\bf Case 1} takes non-linear time, our algorithm can be easily improved (without using Theorem~\ref{the:xp}) so that it runs in linear time.

		\begin{figure}[t]
			\centering
			\includegraphics[width=0.65\linewidth]{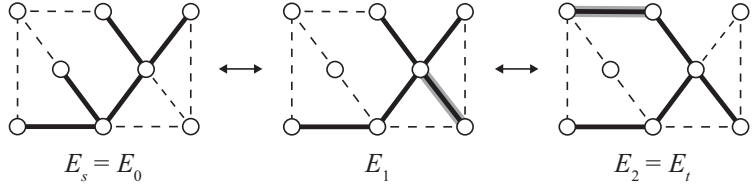}
			\caption{Reconfiguration sequence $\seq{E_0, E_1, E_2}$ in the edge variant under the TJ rule with the property ``a graph is a tree.''}
			\label{fig:edge_tree}
		\end{figure}
	
		We finally consider the property ``a graph is a tree'' under the TJ rule. 
		As we have mentioned in the introduction, for this property, there are several choices of trees even of a particular size, and a reconfiguration sequence does not necessarily consist of isomorphic trees  (see \figurename~\ref{fig:edge_tree}).  
		This ``flexibility'' of subgraphs may yield the contrast between Theorem~\ref{the:edgepath} for the path property and the following theorem for  the tree property.
		\begin{theorem} \label{the:edgetree}
			The edge variant of {\sc subgraph reconfiguration} under the TJ rule can be solved in linear time for the property ``a graph is a tree.''
		\end{theorem}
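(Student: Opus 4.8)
The plan is to reduce the problem to a simple, linearly checkable condition and then to prove the nontrivial direction by showing that the reconfiguration graph of $k$-edge subtrees is connected within each connected component of $G$. Write $k=|E_s|=|E_t|$. If $k=1$ the answer is always $\YES$, since a single edge may jump directly to any other edge (after deletion the set is empty, and the target single edge is itself a tree). So assume $k\ge 2$; I claim the answer is $\YES$ if and only if $E_s$ and $E_t$ lie in the same connected component of $G$, which is checkable in linear time by computing the components (each subtree, being connected, lies in exactly one). Necessity is immediate: for $k\ge 2$, deleting one edge from a $k$-edge tree leaves a nonempty forest whose vertices stay within one component, so the added edge must be incident to it; hence every reconfiguration step preserves the component and no sequence crosses between components.

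For sufficiency, fix a connected graph $G$ and a BFS ordering $\omega\colon V(G)\to\{0,\dots,n-1\}$ from a root $r$, so that every non-root vertex has a neighbour of strictly smaller index. Let $S^\ast$ be the set of the $k+1$ smallest-index vertices; it induces a connected subtree via BFS-tree edges, yielding a canonical subtree $C$. I would show that every $k$-edge subtree $T$ transforms into $C$; since TJ moves are reversible, any two subtrees then connect through $C$. The engine is the potential $\Phi(T)=\sum_{v\in V(T)}\omega(v)$, whose unique minimiser over $(k{+}1)$-vertex sets is $S^\ast$, combined with two kinds of moves: \emph{reshaping} moves that keep $V(T)$ fixed (hence $\Phi$ constant) and change only the shape, and \emph{migration} moves that strictly decrease $\Phi$. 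For reshaping I rely on the classical fact that the spanning trees of the connected graph $G[V(T)]$ are all reachable from one another by single edge swaps; these are valid TJ moves, and they let me realise any spanning tree of $G[V(T)]$, in particular one in which a prescribed non-cut vertex is a leaf.

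The heart of the argument, and the step I expect to be the main obstacle, is guaranteeing a productive migration move whenever $V(T)\ne S^\ast$. Let $z$ be the smallest-index vertex missing from $V(T)$; then all smaller indices are present, so the BFS-parent of $z$ lies in $V(T)$ and $z$ is adjacent to $V(T)$. I need a non-cut vertex $w$ of $H:=G[V(T)]$ with $\omega(w)>\omega(z)$: then I reshape $H$ so that $w$ becomes a leaf, delete its leaf edge, and add the edge from $z$ to its parent, obtaining a $k$-edge subtree with $\Phi$ decreased by $\omega(w)-\omega(z)>0$. (The case $r\notin V(T)$ is easier, as the vertex being added has smaller index than all of $V(T)$, so any non-cut vertex works.) The existence of such a $w$ is the crux lemma, which I would prove by contradiction via the block–cut tree of $H$: if every non-cut vertex had index $\le\omega(z)$, then the maximum-index vertex $v_{\max}$ (of index $>\omega(z)$) would be a cut vertex, so some endblock strictly below $v_{\max}$ in the block–cut tree rooted at the block of $r$ would contain a non-cut vertex $\beta$ of index $<\omega(z)$; but then all BFS-ancestors of $\beta$ have index $<\omega(z)$ and hence lie in $V(T)$, so the BFS path from $r$ to $\beta$ stays inside $H$ yet must pass through the separating vertex $v_{\max}$, forcing $\omega(v_{\max})<\omega(\beta)<\omega(z)<\omega(v_{\max})$, a contradiction.

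Iterating migration (each preceded by finitely many reshaping moves) strictly decreases the integer $\Phi$, which is bounded below by $\Phi(C)=\binom{k+1}{2}$; so after finitely many steps $V(T)=S^\ast$, and a final round of reshaping inside $G[S^\ast]$ brings $T$ to $C$. Carrying this out from both $E_s$ and $E_t$ and concatenating establishes reconfigurability, completing the sufficiency direction. The resulting decision procedure only computes connected components and compares the two containing components, so it runs in linear time.
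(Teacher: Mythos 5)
Your characterization (for $k=|E_s|=|E_t|\ge 2$, the answer is $\YES$ if and only if $E_s$ and $E_t$ lie in the same connected component) and your necessity argument coincide with the paper's, but your sufficiency proof takes a genuinely different route. The paper reconfigures $E_s$ toward $E_t$ directly: when the vertex sets of the two trees are disjoint it slides a leaf edge along a connecting path until they meet, and otherwise it inducts on $|E_t\setminus E'|$ for a maximum common subtree $E'$ of $(V_s\cap V_t, E_s\cap E_t)$, swapping one edge of $E_s\setminus E'$ for one edge of $E_t\setminus E'$ so that the common subtree grows. You instead canonicalize: both trees are routed to a BFS-minimal tree $C$ via a potential function, using the classical spanning-tree exchange property for vertex-set-preserving reshaping and a block--cut-tree argument to find a removable non-cut vertex for each migration. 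Your reshaping step, the potential argument, and the crux lemma (with its block--cut-tree proof) are all correct \emph{when the BFS root $r$ belongs to $V(T)$}.

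The gap is the case $r\notin V(T)$, which your parenthetical remark does not actually cover. Your migration move is ``delete the leaf edge at $w$ and add the edge from $z$ to its BFS-parent''; when $r\notin V(T)$ you have $z=r$, which has no BFS-parent, and --- the real problem --- $r$ need not be adjacent to $V(T)$ at all, so no single TJ move can add it. Concretely, let $G$ be the path $r,a,b,c,d$ with BFS indices $0,1,2,3,4$ and let $T$ be the subpath on $\{b,c,d\}$ (so $k=2$): then $z=r$ is at distance two from $V(T)$ and the described move does not exist. Your remark only addresses the existence of the removable non-cut vertex $w$, not the attachability of the vertex being added. The repair stays entirely within your framework: when $r\notin V(T)$, let $v_{\min}$ be the minimum-index vertex of $V(T)$ and add the BFS-parent $p$ of $v_{\min}$; this $p$ \emph{is} adjacent to $V(T)$ and satisfies $\omega(p)<\omega(v)$ for every $v\in V(T)$, so removing any non-cut vertex $w\neq v_{\min}$ of $G[V(T)]$ (such a $w$ exists because $|V(T)|=k+1\ge 3$ and every connected graph on at least two vertices has at least two non-cut vertices) still yields a valid move, after reshaping $w$ into a leaf, that strictly decreases $\Phi$. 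With this one fix your induction terminates at $S^\ast$ as claimed and the proof is complete.
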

		\begin{proof}
			Suppose that $(G,E_s,E_t)$ is a given instance. 
			We may assume without loss of generality that $|E_s| = |E_t| \geq 2$; 
			otherwise $|E_s|=|E_t| \leq 1$ holds, and hence the instance is trivially a $\YES$-instance.
			We will prove below that any instance with $|E_s| = |E_t| \geq 2$ is a $\YES$-instance if and only if all the edges in $E_s$ and $E_t$ are contained in the same connected component of $G$.
			Note that this condition can be checked in linear time. 
			
			We first prove the only-if direction of our claim. 
			Since $|E_s| = |E_t| \geq 2$ and subgraphs always must retain a tree structure (more specifically, they must be connected graphs), observe that we can exchange edges only in the same connected component of $G$. 
			Thus, the only-if direction follows.

			To complete the proof, it suffices to prove the if direction of our claim.
			For notational convenience, for any feasible solution $E_i$ we denote by $T_i$ the tree represented by $E_i$, and by $V_i$ the vertex set of $T_i$.
			In this direction, we consider the following two cases: (a) $V_s \cap V_t = \emptyset$, and (b) $V_s \cap V_t \neq \emptyset$.
			
			We consider case (a), that is, $V_s \cap V_t = \emptyset$.
			Since $T_s$ and $T_t$ are contained in one connected component of $G$, there exists a path $\seq{v_0,v_1,\ldots,v_\ell}$ in $G$ such that $v_0 \in V_s$, $v_\ell \in V_t$ and $v_i \notin V_s \cup V_t$ for all $i \in \{1,2,\ldots,\ell-1 \}$. 
			Since $T_s$ is a tree, it has at least two degree-one vertices. 
			Let $v_s$ be any degree-one vertex in $V_s \setminus \{ v_0 \}$, and let $e_s$ be the leaf edge of $T_s$ incident to $v_s$.
			Then, we can exchange $e_s$ with $v_0 v_1$, and obtain another tree represented by the resulting edge subset $(E_s \cup \{ v_0 v_1 \}) \setminus \{e_s \}$.
			By repeatedly applying this operation along the path $\seq{v_1,v_2,\ldots,v_\ell}$, we can obtain a solution $E_k$ such that $V_k \cap V_t = \{ v_\ell \} \neq \emptyset$; 
			this case will be considered below. 
			
			We finally consider case (b), that is, $V_s \cap V_t \neq \emptyset$.
			Consider the graph $(V_s \cap V_t, E_s \cap E_t)$. 
			Then, $(V_s \cap V_t, E_s \cap E_t)$ is a forest, and let $G^\prime = (V^\prime,E^\prime)$ be a connected component (i.e., a tree) of $(V_s \cap V_t, E_s \cap E_t)$ whose edge set is of maximum size.
			We now prove that there is a reconfiguration sequence between $E_s$ and $E_t$ by induction on $k = |E_s \setminus E^\prime| = |E_t \setminus E^\prime|$.
			If $k = 0$, then $E_s = E^\prime = E_t$ and hence the claim holds.
			We thus consider the case where $k > 0$ holds.
			Since $G^\prime$ is a proper subtree of $T_t$, there exists at least one edge $e_t$ in $E_t \setminus E^\prime$ such that one endpoint of $e_t$ is contained in $V^\prime$ and the other is not.
			We claim that there exists an edge $e_s$ in $E_s \setminus E^\prime$ which can be moved into $e_t$, that is, the subgraph represented by the resulting edge subset $(E_s \cup \{ e_t \}) \setminus \{ e_s \}$ forms a tree. 
			If both endpoints of $e_t$ are contained in $V_s$ (not just $V^\prime$), $E_s \cup \{ e_t \}$ contains a cycle; let $C \subseteq E_s \cup \{ e_t \}$ be the edge set of the cycle.
			Since the subgraph $T_t$ has no cycle, there exists at least one edge in $C \setminus E_t$, and we choose one of them as $e_s$.
			On the other hand, if just one endpoint of $e_t$ is contained in $V_s$, then we choose a leaf edge of $T_s$ in $E_s \setminus E^\prime$ as $e_s$.
			Note that there exists such a leaf edge since $G^\prime$ is a proper subtree of $T_s$.
			From the choice of $e_s$ and $e_t$, we know the subgraph represented by the resulting edge subset $(E_s \cup \{ e_t \}) \setminus \{ e_s \}$ forms a tree; let $E_k = (E_s \cup \{ e_t \}) \setminus \{ e_s \}$.
			Furthermore, since $E_k \cap E_t$ includes $E^\prime \cup \{ e_t \}$ and the subgraph formed by $E^\prime \cup \{ e_t \}$ is connected, the subgraph formed by $E_k \cap E_t$ has a connected component whose edge set has size at least $|E^\prime| +1$.
			Therefore, we can conclude that $E_k$ is reconfigurable into $E_t$ by the induction hypothesis.
		\end{proof}

\section{Induced and spanning variants} \label{sec:vertex}
		In this section, we deal with the induced and spanning variants where subgraphs are represented as vertex subsets. 
		Most of our results for these variants are hardness results, except for Theorems~\ref{the:spanningtreetj} and \ref{the:spanningbicliquesolve}.
	
	\subsection{Path and cycle}
		In this subsection, we show that both induced and spanning variants under the TJ or TS rules are PSPACE-complete for the properties ``a graph is a path'' and ``a graph is a cycle.''
		All proofs in this subsection make use of reductions that employ almost identical constructions.
		Therefore, we describe the detailed proof for only one case, and give proof sketches for the other cases.

We give polynomial-time reductions from the {\sc shortest path reconfiguration} problem, which can be seen as a {\sc subgraph reconfiguration} problem, defined as follows~\cite{B13}. 
		For a simple, unweighted, and undirected graph $G$ and two distinct vertices $s,t$ of $G$, {\sc shortest path reconfiguration} is the induced (or spanning) variant of {\sc subgraph reconfiguration} under the TJ rule for the property ``a graph is a shortest $st$-path.'' 
		Notice that there is no difference between the induced variant and the spanning variant for this property, because any shortest path in a simple graph forms an induced subgraph. 
		This problem is known to be PSPACE-complete~\cite{B13}. 
		
		Let $d$ be the (shortest) distance from $s$ to $t$ in $G$.
		For each $i \in \{0,1,\ldots,d\}$, we denote by $L_i \subseteq V(G)$ the set of vertices that lie on a shortest $st$-path at distance $i$ from $s$.
		Therefore, we have $L_0 = \{ s \}$ and $L_d = \{ t \}$.
		We call each $L_i$ a  {\em layer.}
		Observe that any shortest $st$-path contains exactly one vertex from each layer, and 
we can assume without loss of generality that $G$ has no vertex which does not belong to any layer.
%
		

%

		We first give the following theorem.
		\begin{theorem}\label{the:path_tj}
		For the property ``a graph is a path,'' the induced and spanning variants of {\sc subgraph reconfiguration} under the TJ rule are both PSPACE-complete on bipartite graphs. 
		\end{theorem}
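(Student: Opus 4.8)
The plan is to establish membership in PSPACE and then reduce from {\sc shortest path reconfiguration} (SPR), reusing the layered structure $L_0,\ldots,L_d$ already set up. For membership, I would observe that a solution is a vertex subset of size at most $|V(G)|$ and that feasibility is cheap to test: for the induced variant one checks in polynomial time whether $G[V']$ is a path, and for the spanning variant whether $G[V']$ contains a Hamiltonian path, which lies in $\mathrm{NP}\subseteq\mathrm{PSPACE}$. Since a single TJ step is likewise verifiable in polynomial time, the reachability question can be answered by a nondeterministic polynomial-space search and hence lies in PSPACE by Savitch's theorem.

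For hardness I would first preprocess the SPR instance $(G,s,t)$ by deleting every edge joining two vertices of the same layer. Because a shortest $st$-path, and every TJ move between shortest $st$-paths, uses only edges between consecutive layers, this deletion changes neither the set of shortest $st$-paths nor their reconfiguration graph; moreover it makes $G$ bipartite, with the $2$-colouring given by the parity of the layer index. I would then build $G'$ by attaching to $s$ a new pendant path $\seq{s,p_1,p_2}$ and to $t$ a new pendant path $\seq{t,q_1,q_2}$, extending the $2$-colouring so that $G'$ stays bipartite. A source shortest $st$-path $\seq{s=v_0,v_1,\ldots,v_d=t}$ maps to $V_s=\{p_1,p_2\}\cup\{v_0,\ldots,v_d\}\cup\{q_1,q_2\}$, which induces a path in $G'$, and likewise for the target; note $|V_s|=|V_t|=d+5$.

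The heart of the correctness argument is a \emph{rigidity} claim: in the connected component of the reconfiguration graph of $G'$ containing $V_s$, the four pendant tokens and the tokens on $s$ and $t$ never move, so the only mobile tokens sit on the internal vertices $v_1,\ldots,v_{d-1}$. I would prove this by a case analysis on which token is removed. Since the pendant leaves $p_2,q_2$ have their unique neighbour \emph{inside} the pendant, and since removing $s$ (or $t$, or an internal pendant vertex) splits the current path into two pieces that cannot be rejoined by a single added vertex, none of these tokens admits a feasible TJ move. For a token on an internal vertex $v_i\in L_i$, removing it splits the path at the neighbours $v_{i-1}\in L_{i-1}$ and $v_{i+1}\in L_{i+1}$, and because edges run only between consecutive layers, the added vertex $z$ that reconnects the path must lie in $L_i$ and be adjacent to both $v_{i-1}$ and $v_{i+1}$ --- which is exactly a legal SPR move. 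Consequently the reachable component in $G'$ corresponds bijectively and adjacency-preservingly to the component of the source path under SPR, so $(G,s,t)$ and $(G',V_s,V_t)$ are equivalent instances.

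The main obstacle will be making the rigidity argument airtight for the \emph{spanning} variant, where feasibility only requires a Hamiltonian path rather than an induced one and therefore a priori permits more moves. Here I would argue that any Hamiltonian path of $G'[V']$ is forced to begin $p_2,p_1,s$ and end $t,q_1,q_2$ by the degrees of the pendant vertices, so it must contain a Hamiltonian $st$-path on the $d+1$ middle vertices; as this path has length $d$ it is a shortest $st$-path, and the layered bipartite structure again forces the single replaced middle vertex to realise a legal SPR move (if that vertex left layer $L_i$, some layer would be skipped and no Hamiltonian $st$-path could exist). This shows the spanning variant produces the same reachable component as the induced variant, completing the reduction for both and, with the membership argument, yielding PSPACE-completeness on bipartite graphs.
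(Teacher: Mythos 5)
Your proposal is correct and takes essentially the same approach as the paper: the identical reduction from {\sc shortest path reconfiguration} that attaches length-two pendant paths at $s$ and $t$, together with the same key rigidity lemma showing that the six special tokens (the four pendant vertices plus $s$ and $t$) can never move and that any move of a middle token must replace a vertex of layer $L_i$ by another vertex of $L_i$ adjacent to its path neighbours, so reachable solutions correspond exactly to shortest $st$-paths. Your only additions are the explicit PSPACE-membership argument (which the paper merely asserts) and different vertex names; the substance is the same.
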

		\begin{proof}
		Observe that these variants are in PSPACE. 
		Therefore, 
we construct a polynomial-time reduction from {\sc shortest path reconfiguration}.

		\begin{figure}[t]
			\centering
			\includegraphics[width=0.8\linewidth]{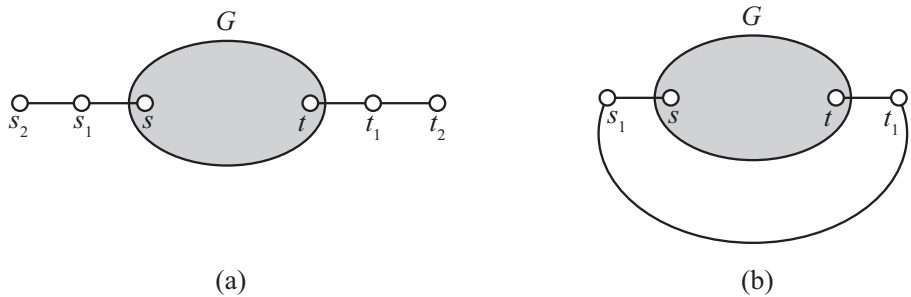}
			\caption{Reductions for the properties (a) ``a graph is a path,'' and (b) ``a graph is a cycle.''}
			\label{fig:pathcycle_tj}
		\end{figure}

		Let $(G,V_s,V_t)$ be an instance of {\sc shortest path reconfiguration}.
		Since any shortest $st$-path contains exactly one vertex from each layer, we can assume without loss of generality that $G$ has no edge joining two vertices in the same layer, that is, each layer $L_i$ forms an independent set in $G$.
		Then, $G$ is a bipartite graph. 
		From $(G,V_s,V_t)$, we construct a corresponding instance $(G^\prime,V^\prime_s,V^\prime_t)$ for the induced and spanning variants; 
note that we use the same reduction for both variants. 
		Let $G^\prime$ be the graph obtained from $G$ by adding four new vertices $s_1,s_2,t_1,t_2$ which are connected with four new edges $s_2s_1$, $s_1s$, $tt_1$, $t_1t_2$. (See \figurename~\ref{fig:pathcycle_tj}(a).)
		Note that $G^\prime$ is also bipartite. 
		We then set $V^\prime_s = V_s \cup \{ s_1,s_2,t_1,t_2\}$ and $V^\prime_t = V_t \cup \{ s_1,s_2,t_1,t_2\}$.
		Since each of $V_s$ and $V_t$ induces a shortest $st$-path in $G$, each of $V^\prime_s$ and $V^\prime_t$ is a feasible solution to both variants.
		This completes the polynomial-time construction of the corresponding instance.
		
 		We now give the key lemma for proving the correctness of our reduction.
		\begin{lemma} \label{lem:path_tj_pro}
			Let $V^\prime \subseteq V(G^\prime)$ be any solution for the induced or spanning variant which is reachable by a reconfiguration sequence from $V^\prime_s$ {\rm (}or $V^\prime_t${\rm )} under the TJ rule.
			Then, $V^\prime$ satisfies the following two conditions{\rm :}
			\begin{listing}{\rm (a)}
				\item[{\rm (a)}] $s_2,s_1,s,t,t_1,t_2 \in V^\prime${\rm ;} and
				\item[{\rm (b)}] $V^\prime$ contains exactly one vertex from each layer of $G$.
			\end{listing}
		\end{lemma}
		\begin{proof}
			We first prove that condition (a) is satisfied.
			Notice that both $V^\prime_s$ and $V^\prime_t$ satisfy this condition.
			Because $V^\prime$ is reconfigurable from $V^\prime_s$ or $V^\prime_t$, it suffices to show that we cannot move any of $s_2,s_1,s,t,t_1,t_2$ in a reconfiguration sequence starting from $V^\prime_s$ or $V^\prime_t$. 
			Suppose for sake of contradiction that we can move at least one of $s_2,s_1,s,t,t_1,t_2$.
			Then, the first removed vertex $v \in \{s_2,s_1,s,t,t_1,t_2\}$ must be either $s_2$ or $t_2$; otherwise the resulting subgraph would be disconnected. 
			Let $w$ be the vertex with which we exchanged $v$.
			Then, $w \in V(G^\prime) \setminus \{s_2,s_1,s,t,t_1,t_2\} = V(G) \setminus \{s,t\}$.
			Therefore, the resulting vertex subset cannot induce a path, and hence it cannot be a solution for the induced variant.
			Similarly, the induced subgraph cannot contain a spanning path, and hence it cannot be a solution for the spanning variant.

			We next show that condition (b) is satisfied.
			Recall that $d$ denotes the number of edges in a shortest $st$-path in $G$.
			Then, we have $|V^\prime| = |V^\prime_s| = |V^\prime_t| = d+5$.
			By condition~(a), we know that $V^\prime$ contains $s_2,s_1,t_1,t_2$, and hence in both induced and spanning variants, $s$ and $t$ must be connected by a path formed by $d+1$ vertices in $V^\prime \setminus \{s_2,s_1,t_1,t_2\}$. 
			Since the length of this $st$-path is $d$, this path is shortest and hence $V^\prime$ must contain exactly one vertex from each of $d+1$ layers of $G$.
		\end{proof}

		Consider any vertex subset $V^\dprime \subseteq V(G^\prime)$ which satisfies conditions~(a) and (b) of Lemma~\ref{lem:path_tj_pro};
note that $V^\dprime$ is not necessarily a feasible solution. 
		Then, these conditions ensure that $V^\dprime \setminus \{ s_2,s_1,t_1,t_2 \}$ forms a shortest $st$-path in $G$ if and only if the subgraph represented by $V^\dprime$ induces a path in $G^\prime$.
		Thus, an instance $(G,V_s,V_t)$ of {\sc shortest path reconfiguration} is a $\YES$-instance if and only if the corresponding instance $(G^\prime,V^\prime_s,V^\prime_t)$ of the induced or spanning variant is a $\YES$-instance.
		This completes the proof of Theorem~\ref{the:path_tj}.
\end{proof}

		Similar arguments give the following theorem. 
		\begin{theorem}\label{the:cycle_tj}
		Both the induced and spanning variants of {\sc subgraph reconfiguration} under the TJ rule are PSPACE-complete for the property ``a graph is a cycle.''
		\end{theorem}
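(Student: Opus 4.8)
The plan is to mirror the reduction used for Theorem~\ref{the:path_tj}, modifying the frame gadget so that feasible solutions become cycles rather than paths. Membership in PSPACE is routine: one can check in polynomial time whether a candidate is a cycle, and a reconfiguration sequence can be sought in polynomial space by the standard nondeterministic guess-and-verify argument together with Savitch's theorem. For the hardness I would again reduce from {\sc shortest path reconfiguration} on a layered graph $G$ with layers $L_0 = \{s\}, L_1, \ldots, L_d = \{t\}$, which we may assume has edges only between consecutive layers, so that every shortest $st$-path is chordless.

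The construction (see \figurename~\ref{fig:pathcycle_tj}(b)) takes the same four new vertices $s_1, s_2, t_1, t_2$ and edges $s_2 s_1, s_1 s, t t_1, t_1 t_2$ as in the path reduction, but now also adds the closing edge $s_2 t_2$. Setting $V^\prime_s = V_s \cup \{s_1, s_2, t_1, t_2\}$ and $V^\prime_t = V_t \cup \{s_1, s_2, t_1, t_2\}$, each of these vertex sets induces the cycle obtained by walking along a shortest $st$-path in $G$ and closing it through $t, t_1, t_2, s_2, s_1, s$. Since $G$ is layered, the $G$-part is chordless and no frame vertex contributes a chord, so the induced subgraph on $V^\prime_s$ (resp.\ $V^\prime_t$) is exactly a cycle; consequently the induced and spanning variants coincide on all relevant vertex sets, and a single reduction handles both.

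The heart of the proof is the analogue of Lemma~\ref{lem:path_tj_pro}: every $V^\prime$ reachable from $V^\prime_s$ or $V^\prime_t$ still contains the whole frame $\{s_2, s_1, s, t, t_1, t_2\}$ and exactly one vertex from each layer of $G$. The forcing argument now exploits the fact that in a cycle every vertex has degree two, so replacing a vertex $v$ requires a new vertex adjacent to both cycle-neighbours of $v$; a short case check shows that no vertex of $G^\prime$ other than $v$ itself is adjacent to both $G^\prime$-neighbours of any frame vertex. For the spanning variant I would argue, as in the path case, that removing a single frame vertex strands another (e.g.\ deleting $s_2$ leaves $s_1$ and $t_2$ each with a unique present neighbour, hence degree one in the induced subgraph); a degree-one vertex lies on no cycle, so the set cannot even contain a spanning cycle. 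This pins the frame, and a counting argument ($|V^\prime| = d+5$, with the frame contributing a length-five $s$-$t$ arc) forces the remaining $d+1$ vertices to form a shortest $st$-path, yielding condition (b).

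With the lemma in hand, the permitted moves are exactly those that replace an interior vertex $u_i \in L_i$ by another vertex of $L_i$ adjacent to both $u_{i-1}$ and $u_{i+1}$, which are precisely the moves of {\sc shortest path reconfiguration}; hence $(G, V_s, V_t)$ is a $\YES$-instance if and only if $(G^\prime, V^\prime_s, V^\prime_t)$ is, for both variants. The step I expect to be the main obstacle is the degree-two forcing lemma in the spanning case: one must rule out that a ``wrong'' vertex set, although not inducing a cycle, still contains a Hamiltonian cycle. The degree-one observation above is what makes this clean, and verifying it for all six frame vertices is the part that needs the most care.
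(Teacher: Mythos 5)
Your proposal is correct and follows essentially the same route as the paper: the paper also reduces from {\sc shortest path reconfiguration} by closing the $st$-path into a cycle through a private arc of new vertices (it uses two new vertices $s_1,t_1$ forming a path $s$--$s_1$--$t_1$--$t$ of length three, rather than your four vertices plus the closing edge $s_2t_2$), and then reuses the frame-pinning and counting arguments of Lemma~\ref{lem:path_tj_pro}. The gadget difference is immaterial, and your explicit degree-based forcing for the spanning variant is exactly the argument the paper leaves implicit.
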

		\begin{proof}
		Our reduction is the same as in the proof of Theorem~\ref{the:path_tj} except for the following point (see also \figurename~\ref{fig:pathcycle_tj}(b)): 
instead of adding four new vertices, we connect $s$ and $t$ by a path of length three with two new vertices $s_1$ and $t_1$. 
		Then, the same arguments hold as the proof of Theorem~\ref{the:path_tj}.
		\end{proof}
		\subsection{Path and cycle under the TS rule}                    
		
		We now consider the TS rule. 
		Notice that, in the proofs of Theorems~\ref{the:path_tj} and \ref{the:cycle_tj}, we exchange only vertices contained in the same layer. 
		Since any shortest $st$-path in a graph $G$ contains exactly one vertex from each layer, we can assume without loss of generality that each layer $L_i$ of $G$ forms a clique.
		Then, the same reductions work for the TS rule, and we obtain the following theorem. 
		\begin{theorem}\label{the:pathcycle_ts}
		Both the induced and spanning variants of {\sc subgraph reconfiguration} under the TS rule are PSPACE-complete for the properties ``a graph is a path'' and ``a graph is a cycle.''
		\end{theorem}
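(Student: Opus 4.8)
The plan is to reuse verbatim the two reductions built in the proofs of Theorems~\ref{the:path_tj} and~\ref{the:cycle_tj}, and to argue that they remain correct once the rule is changed from TJ to TS. Membership in PSPACE is immediate (the general search of Theorem~\ref{the:xp} together with Savitch's theorem suffices), so all the work lies in re-establishing PSPACE-hardness, again by reduction from {\sc shortest path reconfiguration}.

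First I would isolate the structural fact that governs every move in those reductions. By Lemma~\ref{lem:path_tj_pro} (and its cycle analogue), in any reconfiguration sequence starting from $V^\prime_s$ or $V^\prime_t$ the six (resp.\ four) gadget vertices are frozen, and every intermediate solution contains exactly one vertex from each layer $L_i$ of $G$. Hence a single token move can only replace a vertex $u_i \in L_i$ by another vertex of the \emph{same} layer $L_i$: moving it out of its layer would leave some layer empty or double-occupy another, and in either case the remaining $d+1$ vertices could no longer form (or span) a shortest $st$-path. Thus, \emph{regardless of the rule}, the only legal moves are within-layer exchanges.

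Next I would make the ``clique per layer'' assumption precise. Given an instance $(G,V_s,V_t)$ of {\sc shortest path reconfiguration}, I would add to $G$ every missing edge between two vertices of a common layer, obtaining a graph $\hat{G}$. The key claim is that this changes neither the layers nor the collection of shortest $st$-paths: any $st$-walk using a new edge $uv$ with $u,v \in L_i$ has length at least $i + 1 + (d-i) = d+1 > d$, so no shortest path uses a within-layer edge. Since the feasible solutions and the within-layer exchange moves are exactly as before, $(\hat{G},V_s,V_t)$ is a $\YES$-instance under TJ if and only if $(G,V_s,V_t)$ is. We may therefore assume without loss of generality that each layer of the source instance is a clique.

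Finally I would apply the reduction of Theorem~\ref{the:path_tj} (for the path property) and of Theorem~\ref{the:cycle_tj} (for the cycle property) to this clique-layered instance, producing $(G^\prime,V^\prime_s,V^\prime_t)$. Because each $L_i$ is now a clique, any two vertices of $L_i$ are adjacent in $G^\prime$, so each within-layer exchange---the only legal move by the first step---is realizable as a single token \emph{slide}, and conversely every legal TS slide is a within-layer exchange. Hence the TS-reconfiguration graph on solutions reachable from $V^\prime_s$ coincides with the TJ-reconfiguration graph already analyzed, and the correctness argument of Theorems~\ref{the:path_tj} and~\ref{the:cycle_tj} transfers unchanged. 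The main obstacle is this middle step: verifying that saturating each layer into a clique neither creates shorter $st$-paths nor corrupts the induced-subgraph test---since a one-vertex-per-layer solution contains no two vertices of a common layer, the added clique edges never appear in its induced subgraph---so that equivalence with the original {\sc shortest path reconfiguration} instance is preserved while within-layer jumps become slides.
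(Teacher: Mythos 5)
Your overall strategy is exactly the paper's: keep the reductions of Theorems~\ref{the:path_tj} and~\ref{the:cycle_tj}, saturate each layer $L_i$ of $G$ into a clique, and observe that within-layer exchanges then become token slides. Your explicit justification of the saturation step (no shortest $st$-path can use a within-layer edge, and a one-vertex-per-layer solution never sees the added edges in its induced subgraph) is a detail the paper leaves implicit, and it is correct. Membership in PSPACE is also fine.

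The gap is in how you license the other direction. You assert that, by Lemma~\ref{lem:path_tj_pro}, the gadget vertices stay frozen and only within-layer exchanges are legal ``regardless of the rule,'' and you then transport this fact to the clique-saturated construction. But Lemma~\ref{lem:path_tj_pro} is proved for the construction in which every layer is an \emph{independent set}, and its conclusion is actually \emph{false} for the saturated construction under the TJ rule in the spanning variant. Concretely, let $V^\prime_s$ consist of the gadget vertices together with a shortest path $s,u_1,\ldots,u_{d-1},t$, and take any $w \in L_1 \setminus \{u_1\}$. Jumping the token from $s_2$ to $w$ yields a feasible spanning-variant solution, because $s_1,s,w,u_1,u_2,\ldots,u_{d-1},t,t_1,t_2$ is a spanning path of the resulting induced subgraph --- the edge $wu_1$ exists precisely because you made $L_1$ a clique. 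So on the saturated instance the gadget is \emph{not} frozen under TJ, the TS- and TJ-reconfiguration graphs there do \emph{not} coincide, and the lemma cannot simply be cited. What the theorem needs, and what the paper implicitly relies on, is that the gadget is frozen under TS for a TS-specific reason: each of $s_2,s_1,t_1,t_2$ has all its neighbors occupied, and sliding the token on $s$ (or $t$) into a layer would disconnect $\{s_2,s_1\}$ (resp.\ $\{t_1,t_2\}$) from the rest of the solution, so no feasible slide moves a gadget token. Your layer-token argument (emptying a layer disconnects the solution, since $L_{j-1}$ and $L_{j+1}$ are non-adjacent) does carry over verbatim to the saturated graph; once you replace the appeal to Lemma~\ref{lem:path_tj_pro} by the short direct freezing argument above, every TS-reachable solution satisfies conditions (a) and (b), every TS move is a within-layer exchange, and the rest of your proof goes through and coincides with the paper's.
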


	\subsection{Tree}
		Wasa et al.~\cite{WYA16} showed that the induced variant under the TJ and TS rules is PSPACE-complete for the property ``a graph is a tree.''
		In this subsection, we show that the spanning variant for this property is also PSPACE-complete under the TS rule, while it is linear-time solvable under the TJ rule. 
		
		We first note that our proof of Theorem~\ref{the:pathcycle_ts} yields the following theorem. 
		\begin{theorem}\label{the:spanningtreets}
		The spanning variant of {\sc subgraph reconfiguration} under the TS rule is PSPACE-complete for the property ``a graph is a tree.''
		\end{theorem}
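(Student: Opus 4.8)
The plan is to reuse, essentially verbatim, the reduction from \textsc{shortest path reconfiguration} built in the proof of Theorem~\ref{the:pathcycle_ts}. First I would dispatch membership: a reconfiguration sequence can be verified one step at a time in polynomial space, so by the standard $\mathrm{NPSPACE}=\mathrm{PSPACE}$ argument the spanning tree variant lies in PSPACE. For hardness I take the same layered graph $G$ (with each layer $L_i$ made into a clique, as required for the TS rule in Theorem~\ref{the:pathcycle_ts}) and the same gadget graph $G'$, obtained by attaching the path $s_2 s_1 s$ at $s$ and the path $t\, t_1 t_2$ at $t$, with source $V'_s = V_s \cup \{s_1,s_2,t_1,t_2\}$ and target $V'_t = V_t \cup \{s_1,s_2,t_1,t_2\}$.

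The conceptual pivot I would stress is that, for the spanning variant, the property ``a graph is a tree'' is equivalent to connectivity: a vertex subset $V'$ is feasible exactly when $G'[V']$ is connected, since a connected graph on $|V'|$ vertices always contains a spanning tree. Thus the only thing that changes relative to Theorem~\ref{the:pathcycle_ts} is the feasibility predicate, and the whole task reduces to re-proving the analogue of Lemma~\ref{lem:path_tj_pro} with ``induces/spans a path'' replaced by ``is connected.''

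Condition (a) of that lemma is, if anything, simpler under the tree interpretation. Under the TS rule the tokens on $s_2$ and $t_2$ are frozen, since their unique neighbours $s_1$ and $t_1$ are occupied; the same then applies to $s_1$ and $t_1$, whose neighbours $s_2,s$ and $t_2,t$ are all occupied. Sliding the token on $s$ to a layer-$1$ vertex (or the token on $t$ to a layer-$(d-1)$ vertex) would detach the component $\{s_1,s_2\}$ (resp. $\{t_1,t_2\}$) from the rest, destroying connectivity. Hence no special vertex can ever move, and every reachable solution retains all six special vertices.

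The main work, and the step I expect to be the crux, is condition (b): verifying that widening the property from ``path'' to ``connected'' does not manufacture spurious feasible solutions that could shortcut the reduction. Here I would lean on the layered structure. Since every edge of $G$ joins vertices in the same or consecutive layers, any connected subgraph of $G$ meeting both $L_0=\{s\}$ and $L_d=\{t\}$ must meet every layer $L_0,\dots,L_d$, because a connecting walk changes its layer index by at most one per step. A reachable solution $V'$ satisfies $|V'|=d+5$ and contains the four new gadget vertices, so $V'\cap V(G)$ has exactly $d+1$ vertices (and is connected, as deleting the two pendant paths preserves connectivity) while being forced to cover all $d+1$ layers --- hence exactly one vertex per layer. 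With one vertex per layer, connectivity forces consecutive chosen vertices to be adjacent, so $V'\cap V(G)$ is a shortest $st$-path. Feasibility of $V'$ is therefore equivalent to $V'\cap V(G)$ being a shortest $st$-path, exactly as in Theorem~\ref{the:pathcycle_ts}, and the intra-layer (clique) TS moves correspond bijectively to the token jumps of \textsc{shortest path reconfiguration}. This gives that $(G,V_s,V_t)$ is a $\YES$-instance if and only if $(G',V'_s,V'_t)$ is, completing the reduction.
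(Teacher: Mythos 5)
Your proposal is correct and follows essentially the same route as the paper: reuse the reduction of Theorem~\ref{the:pathcycle_ts}, show that under the TS rule the six attached vertices are frozen and every reachable solution has exactly one vertex per layer, and conclude that connectivity then forces every reachable solution to be a shortest $st$-path, so the ``tree'' (connectivity) property collapses to the path case. Your write-up merely fills in the details (pendant tokens blocked, layer-counting argument) that the paper's proof states without elaboration.
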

		\begin{proof}
		We claim that the same reduction as in Theorem~\ref{the:pathcycle_ts} applies. 
		Let $V^\prime \subseteq V(G^\prime)$ be any solution which is reachable by a reconfiguration sequence from $V^\prime_s$ (or $V^\prime_t$) under the TS rule, where $(G^\prime,V_s^\prime,V_t^\prime)$ is the corresponding instance for the spanning variant, as in the reduction.
		Then, the TS rule ensures that $s_2,s_1,s,t,t_1,t_2 \in V^\prime$ holds, and $V^\prime$ contains exactly one vertex from each layer of $G$.
		Therefore, any solution forms a path even for the property ``a graph is a tree,'' and hence the theorem follows. 
		\end{proof}

		In contrast to Theorem~\ref{the:spanningtreets}, the spanning variant under the TJ rule is solvable in linear time.
		We note that the reduction in Theorem~\ref{the:spanningtreets} does not work under the TJ rule, because the tokens on $s_2$ and $t_2$ can move (jump) and hence there is no guarantee that a solution forms a path for the property ``a graph is a tree.''
		\begin{theorem}\label{the:spanningtreetj}
		The spanning variant of {\sc subgraph reconfiguration} under the TJ rule can be solved in linear time for the property ``a graph is a tree.''
		\end{theorem}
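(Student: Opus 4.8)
The plan is to first recast the property as a connectivity condition: a vertex subset $V'$ is feasible for the spanning variant under ``a graph is a tree'' exactly when $G[V']$ is connected, since a graph contains a spanning subgraph that is a tree if and only if it is connected. So the nodes of the reconfiguration graph are the connected $k$-vertex subsets of $G$, with two adjacent when they differ in a single vertex (a TJ move) and both induce connected subgraphs. I would prove that for $k \ge 2$ the instance $(G, V_s, V_t)$ is a $\YES$-instance if and only if $V_s$ and $V_t$ lie in the same connected component of $G$, whereas for $k = 1$ the answer is always $\YES$ (a single vertex is a trivial tree, and a lone token may jump anywhere). Both conditions are decidable in linear time by computing the connected components of $G$ once.

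Necessity for $k \ge 2$ is immediate: any TJ step between connected solutions $V'$ and $V''$ retains $k-1 \ge 1$ common vertices, so $V'$, $V''$, and hence every solution along a reconfiguration sequence, lie in one component; thus $V_s$ and $V_t$ must share a component. For sufficiency I would fix the component $C$ that contains $V_s$ and $V_t$, run a BFS of $G[C]$ from an arbitrary root, and let $\sigma(u)$ be the BFS index of each $u \in C$. Writing $P_i := \{u : \sigma(u) \le i\}$, every prefix $P_i$ is connected (each non-root vertex is adjacent to its lower-indexed BFS parent), so the canonical set $V^\ast := P_k$ is a connected solution. The goal is to show that every connected $k$-subset $V'$ of $C$ can be reconfigured to $V^\ast$; then $V_s \to V^\ast$ and, by reversing a sequence obtained from $V_t$, $V^\ast \to V_t$, yielding $V_s \to V^\ast \to V_t$.

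To drive $V'$ toward $V^\ast$ I would use the lexicographic value of the increasing vector of BFS indices of $V'$, which is uniquely minimized at $V^\ast$. If $V' \ne V^\ast$, let $j$ be the smallest index missing from $V'$; then $P_{j-1} \subseteq V'$, the vertex $w$ with $\sigma(w) = j$ lies outside $V'$, and its BFS parent lies in $P_{j-1} \subseteq V'$, so $H := G[V' \cup \{w\}]$ is connected. Any single TJ move that adds $w$ and deletes a vertex of index exceeding $j$ strictly lowers the lexicographic value, so it suffices to exhibit a \emph{deletable} (non-cut) vertex of $H$ of index $> j$, i.e.\ a non-cut vertex of $H$ lying outside the connected proper subset $P_j = P_{j-1} \cup \{w\}$. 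This existence statement is the crux, and I expect it to be the main obstacle, because the obvious candidate (say, the largest-index vertex of $V'$) may well be a \emph{cut} vertex of $H$, as one already sees on a $4$-cycle. I would resolve it with the general lemma: if $H$ is connected and $P \subsetneq V(H)$ induces a connected subgraph, then $H$ has a non-cut vertex outside $P$. The proof is by contradiction via the block--cut structure: if every non-cut vertex of $H$ lay in $P$, choose an articulation vertex $u \notin P$ (one exists, as $H$ is not a single block in this case); each component of $H - u$ contains a vertex of a leaf block of $H$, hence a non-cut vertex of $H$, hence a vertex of $P$. But $P$ is connected and avoids $u$, so $P$ is contained in one component of $H - u$, contradicting that it meets all of them.

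Granting the lemma, each step deletes such a non-cut vertex $\ell$ (necessarily $\ell \in V'$ and $\ell \ne w$), so $V'' := (V' \setminus \{\ell\}) \cup \{w\}$ is a legal, connectivity-preserving TJ move of strictly smaller lexicographic value; since there are finitely many $k$-subsets, iterating reaches $V^\ast$. Combining this with the necessity direction and the trivial $k = 1$ case establishes the characterization, and the algorithm that computes components and applies it runs in linear time, so the remaining bookkeeping is routine.
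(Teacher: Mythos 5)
Your proof is correct, but it takes a genuinely different route from the paper's for the interesting (sufficiency) direction. Both arguments start from the same characterization---for $|V_s|=|V_t|\ge 2$ the answer is $\YES$ if and only if $V_s$ and $V_t$ lie in the same connected component, checkable in linear time---and both observe that feasibility is just connectivity of $G[V']$. But the paper proves sufficiency by piggybacking on its edge-variant result (Theorem~\ref{the:edgetree}): it fixes spanning trees $T_s$ of $G[V_s]$ and $T_t$ of $G[V_t]$, invokes the edge-variant reconfiguration sequence $\seq{E(T_s)=E_0,\ldots,E_\ell=E(T_t)}$, and then shows that the induced vertex sets satisfy $|V_i \setminus V_{i-1}| \le 1$ (a two-new-vertices contradiction against $|E_i \setminus E_{i-1}|=1$), so the vertex sequence is itself a valid TJ sequence; it also treats $|V_s|=2$ as a separate base case, since the edge-variant characterization needs at least two edges. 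You instead give a self-contained canonical-form argument: a BFS-prefix target $V^\ast$, a lexicographic potential that strictly decreases, and the block--cut-tree lemma that a connected graph with a connected proper subset $P$ always has a non-cut vertex outside $P$. Your lemma and its proof are sound (each component of $H-u$ contains a leaf block of $H$ and hence a non-cut vertex, while the connected set $P$ avoiding $u$ can meet only one component), and the bookkeeping around it checks out: the deleted vertex is necessarily in $V'\setminus\{w\}$, the swap is a legal connectivity-preserving TJ move, and termination at $V^\ast$ follows from the potential. What each approach buys: the paper's reduction is shorter given that Theorem~\ref{the:edgetree} is already in hand and reuses its machinery; yours is independent of the edge variant, handles $|V_s|=2$ uniformly with no case split, and implicitly yields a short explicit sequence (at most about $2|V_s|$ moves via $V_s \to V^\ast \to V_t$), at the cost of needing the block--cut-tree lemma, which is the real crux and which you correctly identified and proved.
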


		Suppose that $(G,V_s,V_t)$ is a given instance.
		We assume that $|V_s| = |V_t| \geq 2$ holds; otherwise it is a trivial instance.
		Then, Theorem~\ref{the:spanningtreetj} can be obtained from the following lemma. 
		\begin{lemma} \label{lem:spanningtreetj}
			$(G,V_s,V_t)$ with $|V_s| = |V_t| \geq 2$ is a $\YES$-instance if and only if $V_s$ and $V_t$ are contained in the same connected component of $G$.
		\end{lemma}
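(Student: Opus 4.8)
The plan is to first translate the feasibility condition into pure connectivity. For the property ``a graph is a tree,'' a vertex subset $V'$ is feasible in the spanning variant exactly when the induced subgraph $G[V']$ is connected, since a connected graph on $k$ vertices has a spanning tree and, conversely, a graph admitting a spanning tree is connected. Thus the reconfiguration graph has as its nodes all subsets of size $k = |V_s| = |V_t| \ge 2$ inducing a connected subgraph, and a single TJ step replaces one vertex by another while preserving connectivity. With this reformulation the two directions separate cleanly; the only-if direction is routine, whereas the if-direction needs an explicit reconfiguration scheme.

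For the only-if direction I would argue by induction along a reconfiguration sequence $\seq{V_s = V_0, V_1, \ldots}$ that every solution lies in the connected component $K$ of $G$ containing $V_s$. A single TJ step from $V_i \subseteq K$ deletes some vertex $u$ and inserts some vertex $w$; because $|V_i| \ge 2$, the set $V_i \setminus \{u\}$ is a nonempty subset of $K$, and since $V_{i+1} = (V_i \setminus \{u\}) \cup \{w\}$ must induce a connected subgraph, $w$ is joined to a vertex of $K$ and hence also lies in $K$. Therefore $V_{i+1} \subseteq K$, and in particular $V_t \subseteq K$, so $V_s$ and $V_t$ share a component.

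For the if-direction I would fix a spanning tree of the common component $K$ and list its vertices $v_1, v_2, \ldots, v_m$ in a preorder for which every prefix $\{v_1, \ldots, v_j\}$ induces a connected subgraph; let $V^* = \{v_1, \ldots, v_k\}$ be a canonical solution. The goal is to reconfigure an arbitrary feasible $V' \subseteq K$ into $V^*$; since TJ steps are reversible, this yields $V_s \leftrightarrow V^* \leftrightarrow V_t$. While $V' \neq V^*$, let $i$ be the least index with $v_i \notin V'$ and put $A = \{v_1, \ldots, v_{i-1}\} \subseteq V'$, which induces a connected subgraph. The parent of $v_i$ in the spanning tree lies in $A$, so $v_i$ is adjacent to $A$, and I would insert $v_i$ while deleting a vertex of $V' \setminus A$, so that $A \cup \{v_i\} = \{v_1, \ldots, v_i\}$ survives; this strictly increases the length of the matched prefix, so after at most $k$ steps we reach $V^*$.

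The main obstacle is guaranteeing that the deleted vertex can always be chosen outside the already-fixed prefix $A$ while preserving connectivity. I would isolate this as a sublemma: if $H$ is connected and $A \subsetneq V(H)$ with $H[A]$ connected (possibly empty), then some $\ell \in V(H) \setminus A$ satisfies that $H - \ell$ is connected. To prove it, extend a spanning tree of $H[A]$ to a spanning tree $T$ of $H$ rooted in $A$; then $A$ is closed under taking ancestors, so below any vertex of $V(H) \setminus A$ every descendant also lies outside $A$, and a deepest such descendant is a leaf $\ell \notin A$ whose removal leaves $T - \ell$ spanning and connected. Applying this with $H = G[V']$ yields the vertex $\ell \in V' \setminus A$ used above, and since $v_i$ is adjacent to $A \subseteq V' \setminus \{\ell\}$, the set $(V' \setminus \{\ell\}) \cup \{v_i\}$ is connected, as required. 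Finally I would note that the hypothesis $k \ge 2$ is essential, since for $k = 1$ the TJ rule makes every singleton reachable from every other, independent of components.
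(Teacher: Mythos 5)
Your only-if direction and your removable-leaf sublemma are both correct, but the prefix-growing loop in your if direction has a genuine gap at its very first step: the case $i=1$, i.e., when the root $v_1$ of your canonical preorder does not belong to the current solution $V'$. There $A = \emptyset$, the claim ``the parent of $v_i$ lies in $A$'' is vacuous ($v_1$ has no parent), and the concluding assertion ``$v_i$ is adjacent to $A \subseteq V' \setminus \{\ell\}$'' gives no connectivity at all. In fact no single TJ step of the proposed form need exist: let $K$ be a path $v_1 v_2 \cdots v_m$ (so the preorder is the path order and $V^* = \{v_1, \ldots, v_k\}$), take $k=2$ and $V' = \{v_5, v_6\}$. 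Inserting $v_1$ while deleting either vertex of $V'$ yields $\{v_1,v_5\}$ or $\{v_1,v_6\}$, both disconnected, so the loop cannot even start. This is not a boundary technicality; nothing forces $V_s$ or $V_t$ to contain $v_1$, and the missing ingredient is precisely the ``transport'' part of the problem, namely that a solution far away from $V^*$ must first be walked across the component, one vertex at a time, before any prefix of $V^*$ can be matched. Your argument as written only handles solutions that already contain $v_1$.

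The gap is repairable inside your own framework by adding a preliminary walking phase, after which your loop (valid verbatim for $i \ge 2$, and never deleting $v_1$ again since $v_1 \in A$) finishes the job: while $v_1 \notin V'$, take a shortest path in $K$ from $v_1$ to $V'$, let $w \notin V'$ be its last vertex before it meets $V'$, and apply your sublemma to $H = G[V' \cup \{w\}]$ with $A = \{w\}$ to obtain a deletable vertex $\ell \in V'$; the TJ step $V' \mapsto (V' \cup \{w\}) \setminus \{\ell\}$ keeps the solution connected and strictly decreases the distance from $v_1$ to the solution. This walking argument is exactly what the paper's proof supplies, though by a different route: the paper reduces the spanning variant to the already-proved edge variant for trees (Theorem~\ref{the:edgetree}, whose case (a) is this walk of a tree along a path), takes the resulting sequence of edge moves between spanning trees of $G[V_s]$ and $G[V_t]$, and lifts it to a sequence of vertex moves by showing that each edge move changes the vertex set by at most one vertex; it must treat $|V_s| = |V_t| = 2$ separately because that lifting fails when the spanning trees consist of a single edge, whereas a corrected version of your argument would handle all $k \ge 2$ uniformly and additionally give an explicit $O(|V(G)|)$ bound on the length of the reconfiguration sequence.
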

		\begin{proof}
			We first prove the only-if direction. 
			Since $|V_s| = |V_t| \geq 2$ and the property requires subgraphs to be connected, the subgraph induced by any feasible solution contains at least one edge. 
			Because we can exchange only one vertex at a time and the resulting subgraph must retain connected, $V_s$ and $V_t$ are contained in the same connected component of $G$ if $(G,V_s,V_t)$ is a $\YES$-instance. 
			
			Next, we prove the if direction for the case where $|V_s| = |V_t| = 2$.
			In this case, $G[V_s]$ and $G[V_t]$ consist of single edges, say $e_s$ and $e_t$, respectively. 
			Since $V_s$ and $V_t$ are contained in the same connected component of $G$, there is a path in $G$ between $e_s$ and $e_t$. 
			Thus, we can exchange vertices along the path, and obtain a reconfiguration sequence from $V_s$ to $V_t$. 
			In this way, the if direction holds for this case. 
			
			Finally, we prove the if direction for the remaining case, that is, $|V_s| = |V_t| \ge 3$. 
			Consider any spanning trees $T_s$ of $G[V_s]$ and $T_t$ of $G[V_t]$. 
			Since $|V_s| = |V_t| \ge 3$, each of $T_s$ and $T_t$ has at least two edges. 
			Then, if we regard $(G, E(T_s), E(T_t))$ as an instance of the edge variant under the TJ rule for the property ``a graph is a tree,'' we know from the proof of Theorem~\ref{the:edgetree} that it is a $\YES$-instance. 
			Thus, there exists a reconfiguration sequence $\mathcal{E} = \seq{E(T_s)=E_0,E_1,\ldots,E_\ell=E(T_t)}$ of edge subsets under the TJ rule.
			We below show that, based on $\mathcal{E}$, we can construct a reconfiguration sequence between $V_s$ and $V_t$ for the spanning variant under the TJ rule. 
			
			For each $E_i$ in $\mathcal{E}$, let $V_i$ be the vertex set of the tree represented by $E_i$.
			Notice that $V_i$ is a feasible solution for the spanning variant, and that $V_0=V_s$ and $V_\ell=V_t$ hold. 
			We claim that the sequence $\seq{V_s=V_0,V_1,\ldots,V_\ell=V_t}$ of vertex subsets is a reconfiguration sequence for the spanning variant under the TJ rule (after removing redundant vertex subsets if needed). 
			To show this, it suffices to prove $|V_i \setminus V_{i-1}| = |V_{i-1} \setminus V_i| \leq 1$ for all $i \in \{1,2,\ldots,\ell\}$.
			%
			%
			%
			Suppose for the sake of contradiction that there exists $V_i$ such that $|V_i \setminus V_{i-1}| \geq 2$ holds. (See also \figurename~\ref{fig:tree_2vertices}.)
			Since $|E_i| \geq 2$ and $|E_i \setminus E_{i-1}| = 1$ hold, we have $E_i \cap E_{i-1} \neq \emptyset$ and hence $V_i \cap V_{i-1} \neq \emptyset$. 
			Then there is at least one edge $e=uv$ in $E_i \setminus E_{i-1}$ joining a vertex $u \in V_i \setminus V_{i-1}$ and $v \in V_i \cap V_{i-1}$, because $E_i$ must form a connected subgraph.
			Since $|V_i \setminus V_{i-1}| \geq 2$, there is another vertex $u^\prime \neq u$ in $V_i \setminus V_{i-1}$, and there is an edge $e^\prime$ incident to $u^\prime$.
			Note that $e \neq e^\prime$.
			Furthermore, we know that $e^\prime \in E_i \setminus E_{i-1}$ because $u^\prime \in V_i \setminus V_{i-1}$.
			Therefore, we have $e, e^\prime \in E_i \setminus E_{i-1}$, which contradicts the fact that $|E_i \setminus E_{i-1}| = 1$ holds.
		\end{proof}
		
		\begin{figure}[t]
			\centering
			\includegraphics[width=0.3\linewidth]{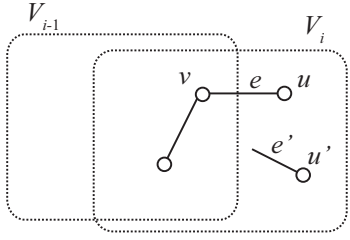}
			\caption{Illustration for the proof of Lemma~\ref{lem:spanningtreetj}.}
			\label{fig:tree_2vertices}
		\end{figure}

	\subsection{Biclique}
		For the property ``a graph is an $(i,j)$-biclique,'' we show that the induced variant under the TJ rule is PSPACE-complete even if $i=j$ holds, or $i$ is fixed.
		On the other hand, the spanning variant under the TJ rule is NP-hard even if $i = j$ holds, while it is polynomial-time solvable when $i$ is fixed.
		


		We first give the following theorem for a fixed $i \ge 1$.
		\begin{theorem} \label{the:inducedbiclique}
		For the property ``a graph is an $(i,j)$-biclique,'' the induced variant of {\sc subgraph reconfiguration} under the TJ rule is PSPACE-complete even for any fixed integer $i \ge 1$.
		\end{theorem}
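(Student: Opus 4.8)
The plan is to first dispose of membership and then concentrate all the work on hardness. Membership in PSPACE is routine: a configuration is a vertex subset of size $i+j$, which takes polynomial space, the property ``a graph is an $(i,j)$-biclique'' can be tested in polynomial time, and reachability in the (exponential but implicitly described) reconfiguration graph is then decidable in nondeterministic polynomial space, hence in PSPACE by Savitch's theorem. The substance is to prove PSPACE-hardness for \emph{every} fixed $i\ge 1$, and for this I would reduce from \textsc{independent set reconfiguration} under the TJ rule, which is exactly the induced variant of {\sc subgraph reconfiguration} for the property ``a graph is edgeless'' and is PSPACE-complete~\cite{KMM12}.

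Let $(H,I_s,I_t)$ be an instance of \textsc{independent set reconfiguration} with $|I_s|=|I_t|=k$. I build $G'$ from $H$ by adding $i$ new vertices $a_1,\dots,a_i$ that are pairwise non-adjacent and each adjacent to every vertex of $H$, so that each $a_p$ is universal in $G'$; the constant $i$ is hard-wired into the construction, which is what yields the statement for each fixed $i$. I set $V'_s=\{a_1,\dots,a_i\}\cup I_s$ and $V'_t=\{a_1,\dots,a_i\}\cup I_t$ with $j:=k$. Since the $a_p$ are pairwise non-adjacent, are complete to $I_s$ (resp.\ $I_t$), and $I_s$ (resp.\ $I_t$) is independent in $H$, each of $V'_s,V'_t$ induces $K_{i,j}$ and is therefore feasible, and the construction is clearly polynomial.

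The crux is the structural claim that, provided $j>\Delta(H)$, \emph{every} induced $(i,j)$-biclique in $G'$ has small side exactly $\{a_1,\dots,a_i\}$ and large side a $k$-element independent set of $H$. To see this, let $S$ induce $K_{i,j}$ with parts $A$ (size $i$) and $B$ (size $j$). Because $j\ge 2$, the independent set $B$ has at least two vertices, so no universal $a_p$ can lie in $B$ (it is adjacent to everything); hence every $a_p\in S$ lies in $A$ and $B\subseteq V(H)$. Every vertex of $A$ is adjacent to all $j$ vertices of $B\subseteq V(H)$, so an $H$-vertex in $A$ would need at least $j>\Delta(H)$ neighbours in $H$, which is impossible; thus $A\subseteq\{a_1,\dots,a_i\}$, forcing $A=\{a_1,\dots,a_i\}$. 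Granting this, every feasible solution contains all the $a_p$, so along any TJ reconfiguration sequence the tokens on $a_1,\dots,a_i$ are frozen and each step is a TJ move of the \textsc{independent set reconfiguration} instance on the $H$-part (any $H$-vertex is automatically complete to the universal $a_p$, so the only constraint is independence in $H$); conversely any ISR sequence lifts by keeping the $a_p$ fixed. Hence $(H,I_s,I_t)$ is a $\YES$-instance if and only if $(G',V'_s,V'_t)$ is.

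The main obstacle is precisely the hypothesis $j>\Delta(H)$ invoked above: without it, a high-degree $H$-vertex could act as an alternative small-side vertex and the reconfiguration could leave the ``standard'' bicliques, so the $a_p$ would no longer be frozen and the projection to $H$ would fail. I would remove this by reducing from \textsc{independent set reconfiguration} on instances of bounded maximum degree, where the problem remains PSPACE-complete and the solution size $k$ grows with the instance, so that $j=k>\Delta(H)$ holds automatically; alternatively one pads the ISR instance to inflate $k$ beyond $\Delta(H)$, the delicate point being to arrange the padding so that the extra tokens are forced to stay in place and do not create additional freedom inside $H$. Establishing this degree condition (and hence the freezing of $a_1,\dots,a_i$) is where the real care lies; the remaining biclique bookkeeping is the routine part.
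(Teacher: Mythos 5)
Your reduction is sound once the degree hypothesis is secured, but it takes a genuinely different route from the paper's. The paper reduces from \textsc{maximum independent set reconfiguration}: it adds a set $L$ of $i$ new vertices and one further new vertex $r$, making $L$ complete to $V(G)\cup\{r\}$, and sets $V'_s=V_s\cup L\cup\{r\}$, $V'_t=V_t\cup L\cup\{r\}$. The tokens on $L\cup\{r\}$ are frozen not by any degree bound but by \emph{maximality} of the given independent sets: moving such a token into $V(G)$ would leave a feasible set whose $V(G)$-part is an independent set of size $|V_s|+1$, which cannot exist. The pendant vertex $r$ exists precisely to kill the $i=1$ failure mode you identify (a $G$-vertex adjacent to all of $V_s$ trying to become the hub): such a vertex is not adjacent to $r$, so the resulting set is not a biclique. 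What maximality buys is that no hypothesis like $j>\Delta(H)$ is needed, so the paper relies only on the hardness result it cites; what your degree condition buys is a cleaner characterization of \emph{all} feasible solutions and no auxiliary pendant vertex, at the price of importing PSPACE-completeness of \textsc{independent set reconfiguration} under TJ on bounded-degree graphs. That imported result is true (hardness is known, e.g., on planar graphs of maximum degree three, and also follows from hardness on bounded-bandwidth graphs since bandwidth bounds the maximum degree), and since instances with constant solution size are polynomial-time solvable by the XP algorithm of Theorem~\ref{the:xp}, hardness persists on instances with $k>\Delta(H)$. So your first repair route works and should be committed to; the padding alternative, which you leave incomplete, can simply be dropped.

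Two small repairs are needed in your structural claim. First, your vertices $a_1,\dots,a_i$ are \emph{not} universal in $G'$ --- you made them pairwise non-adjacent, as you must, or else $V'_s$ would not induce a biclique --- so for $i\ge 2$ the one-line argument ``no $a_p$ can lie in $B$ because it is adjacent to everything'' fails as stated: a priori $B$ could contain several of the $a_p$. The correct argument is: if some $a_p\in B$, then since $a_p$ is complete to $V(H)$ and $B$ is independent, $B$ contains no $H$-vertex, hence $B\subseteq\{a_1,\dots,a_i\}$ and $j\le i$, contradicting $j=k>i$. Second, you should state explicitly that you restrict to source instances with $k>\max(i,\Delta(H))$ (enforced by the bounded-degree source problem together with discarding the polynomially solvable small-$k$ instances); with these two points fixed, the rest of your argument --- hubs frozen, every TJ step projecting to an independent-set step on $H$, and conversely --- goes through.
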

		\begin{proof}
			We give a polynomial-time reduction from the {\sc maximum independent set reconfiguration} problem~\cite{W14}, which can be seen as a {\sc subgraph reconfiguration} problem.
			The {\sc maximum independent set reconfiguration} problem is the induced variant for the property ``a graph is edgeless'' such that two given independent sets are maximum. 
			Note that, because we are given maximum independent sets, there is no difference between the TJ and TS rules for this problem. 
			This problem is known to be PSPACE-complete~\cite{W14}.
			
			\begin{figure}[t]
				\centering
				\includegraphics[width=0.3\linewidth]{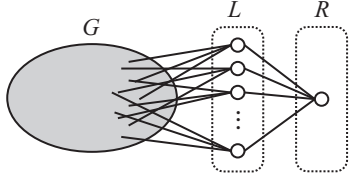}
				\caption{Reduction for the property ``a graph is an $(i,j)$-biclique'' for any fixed $i \ge 1$.}
				\label{fig:biclique_ind}
			\end{figure}
			
			Suppose that $(G,V_s,V_t)$ is an instance of {\sc maximum independent set reconfiguration}.
			We now construct a corresponding instance $(G^\prime,V^\prime_s,V^\prime_t)$ of the induced variant under the TJ rule for the property ``a graph is an $(i,j)$-biclique,'' where $i$ is any fixed positive integer. 
			(See also \figurename~\ref{fig:biclique_ind}.)
			Let $L$ and $R$ be distinct sets of new vertices such that $|L| = i$ and $|R| = 1$.
			The vertex set of $G^\prime$ is defined as $V(G^\prime) = V(G) \cup L \cup R$, and the edge set of $G^\prime$ as $E(G^\prime) = E(G) \cup \{ uv \mid u \in V(G), v \in L \} \cup \{ vw \mid v \in L, w \in R \}$, that is, new edges are added so that there are edges between each vertex of $L$ and  each vertex of $V(G) \cup R$.
			Let $V^\prime_s = V_s \cup L \cup R$ and $V^\prime_t = V_t \cup L \cup R$.
			Since $L$, $R$, $V_s$ and $V_t$ are all independent sets in $G^\prime$, both $V^\prime_s$ and $V^\prime_t$ form $(i,j)$-bicliques, where $i = |L|$ and $j = |V_s \cup R| = |V_t \cup R|$.  We have now completed the construction of our corresponding instance, which can be accomplished in polynomial time. 
			
			Because each vertex in $L$ is connected to all vertices in $V(G)$, a vertex subset $V^\dprime \subseteq V(G^\prime)$ cannot form a bipartite graph (and hence an $(i,j)$-biclique) if $V^\dprime \cap L \neq \emptyset$ or if $V^\dprime$ contains two vertices joined by an edge in $G$.  
			In addition, we cannot move any token placed on $L \cup R$ onto a vertex in $V(G)$ because both $V_s$ and $V_t$ are maximum independent sets of $G$.
			Note that, in the case of $i=1$, there may exist a vertex $u$ in $V(G)$ which is adjacent to all vertices in $V_s$ or in $V_t$. 
			However, $u$ is not adjacent to the vertex in $R$, and hence the token placed on the vertex in $L$ cannot be moved to $u$ in this case, either.  
			Therefore, for any feasible solution $V^\prime \subseteq V(G^\prime)$ which is reconfigurable from $V^\prime_s$ or $V^\prime_t$ under the TJ rule, the vertex subset $V^\prime \cap V(G)$ forms a maximum independent set of $G$. 
			Thus, an instance $(G,V_s,V_t)$ of {\sc maximum independent set reconfiguration} is a $\YES$-instance if and only if our corresponding instance $(G^\prime,V^\prime_s,V^\prime_t)$ is a $\YES$-instance.
		\end{proof}

		The corresponding instance $(G^\prime,V^\prime_s,V^\prime_t)$ constructed in the proof of Theorem~\ref{the:inducedbiclique} satisfies $i=j$ if we set $i = |V_s| + 1 = |V_t| + 1$.
		Therefore, we can obtain the following corollary.
		\begin{corollary}\label{col:inducedbiclique}
		For the property ``a graph is an $(i,j)$-biclique,'' the induced variant of {\sc subgraph reconfiguration} under the TJ rule is PSPACE-complete even if $i=j$ holds.
		\end{corollary}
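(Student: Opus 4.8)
The plan is to reuse verbatim the reduction constructed in the proof of Theorem~\ref{the:inducedbiclique}, and simply instantiate the parameter $i$ so that the two sides of the resulting biclique have equal size. Recall that from an instance $(G,V_s,V_t)$ of {\sc maximum independent set reconfiguration} that reduction builds $G^\prime$ together with vertex sets $L$ (with $|L|=i$) and $R$ (with $|R|=1$), and sets $V^\prime_s = V_s \cup L \cup R$ and $V^\prime_t = V_t \cup L \cup R$; these induce $(i,j)$-bicliques whose bipartition is $\{L,\, V_s \cup R\}$, so that $j = |V_s|+1$.

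First I would set $i = |V_s|+1$. Since $V_s$ and $V_t$ are maximum independent sets of the same graph, $|V_s| = |V_t|$, and hence $i = |V_s|+1 = |V_t|+1$ is well defined and depends only on the input instance. With this choice $|L| = i = |V_s|+1 = |V_s \cup R| = j$, so $V^\prime_s$ and $V^\prime_t$ induce balanced $(i,i)$-bicliques; note that $i$ grows with the instance, which is exactly what is needed for hardness.

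Second I would observe that the correctness argument of Theorem~\ref{the:inducedbiclique} nowhere uses that $i$ is a fixed constant: it needs only that $L \neq \emptyset$ (so that the vertices of $L$ are adjacent to all of $V(G) \cup R$), together with the maximality of $V_s$ and $V_t$. Consequently the same reasoning shows that every solution reachable from $V^\prime_s$ or $V^\prime_t$ under the TJ rule contains $L \cup R$ and meets $V(G)$ in a maximum independent set of $G$, yielding the equivalence that $(G,V_s,V_t)$ is a $\YES$-instance if and only if $(G^\prime,V^\prime_s,V^\prime_t)$ is. Membership in PSPACE is immediate, so PSPACE-completeness for the balanced case $i=j$ follows.

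The point that warrants the most care is confirming that allowing $i=j$ does not create extra reachable solutions through the symmetry of a balanced complete bipartite graph. Here I would note that $G^\prime[V^\prime]$ is connected for every feasible $V^\prime$, and a connected complete bipartite graph has a unique bipartition into its two color classes; since the vertices of $L$ are pairwise non-adjacent, as are those of $V_s \cup R$, the two sides are forced to be $L$ and $V_s \cup R$, and no vertex of $L$ can share a side with a vertex of $V_s$. This rigidity, combined with the maximality argument inherited from Theorem~\ref{the:inducedbiclique}, is what rules out the degenerate behavior and completes the plan.
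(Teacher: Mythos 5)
Your proposal is correct and matches the paper's own argument: the paper proves this corollary by exactly the same observation, namely that the reduction of Theorem~\ref{the:inducedbiclique} yields $i=j$ once one sets $i = |V_s|+1 = |V_t|+1$. Your additional checks (that the correctness argument never uses fixedness of $i$, and that the bipartition of a connected biclique is forced, so the balanced case admits no extra reachable solutions) are sound and merely make explicit what the paper leaves implicit.
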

		
		
		We next give the following theorem.
		\begin{theorem} \label{the:spanningbiclique}
		For the property ``a graph is an $(i,j)$-biclique,'' the spanning variant of {\sc subgraph reconfiguration} under the TJ rule is NP-hard even if $i=j$ holds.
		\end{theorem}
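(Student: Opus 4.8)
The plan is to reduce from the \textbf{balanced complete bipartite subgraph} problem: given a bipartite graph $B=(X\cup Y,F)$ and an integer $k$, decide whether $B$ contains a complete bipartite subgraph with exactly $k$ vertices in $X$ and $k$ vertices in $Y$; this problem is NP-complete. First I would record what feasibility means in the spanning variant: a vertex subset $V'$ of size $2i$ is feasible exactly when $G'[V']$ admits a bipartition into two sides of size $i$ such that every cross pair is an edge, i.e.\ $G'[V']$ contains a spanning $K_{i,i}$. Note that, unlike the NP-hardness of the edge variant in Theorem~\ref{the:edgepath}, here feasibility of a \emph{single} candidate is easy to test (one may check it through the connected components of the complement of $G'[V']$), so all of the hardness must be pushed into the \emph{reachability} between two feasible solutions rather than into verifying one solution.

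The construction idea is to build $G'$ out of three regions: a rigid \emph{source gadget} and a rigid \emph{target gadget}, each of which by itself forms a fixed $(i,i)$-biclique carrying $V_s$ and $V_t$ respectively, and a central \emph{selection gadget} built from $B$. In the selection gadget I keep $X$ and $Y$ (with $X,Y$ independent and exactly the $B$-edges between them) and add two sets $F_X,F_Y$ of $i-k$ \emph{fillers}, making each vertex of $F_X$ adjacent to all of $Y\cup F_Y$ and each vertex of $F_Y$ adjacent to all of $X\cup F_X$. I set $i=j=k+|F_X|$, so the two sides of a central biclique have equal size and the construction automatically satisfies $i=j$. The effect of the fillers is that if one tries to realize a spanning $K_{i,i}$ with sides $X_0\cup F_X$ and $Y_0\cup F_Y$ (where $X_0\subseteq X$, $Y_0\subseteq Y$, $|X_0|=|Y_0|=k$), then every required cross edge is present automatically \emph{except} the edges between $X_0$ and $Y_0$; hence such a configuration is feasible if and only if $X_0\cup Y_0$ induces a $K_{k,k}$ in $B$. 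The three regions are then wired together so that the only way to migrate tokens from the source gadget to the target gadget is to route them through a feasible configuration of the central region.

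With this in place I would prove the two directions. For the forward direction, if $B$ contains a $K_{k,k}$ on $X_0\cup Y_0$, I exhibit an explicit reconfiguration sequence under TJ: move the tokens of the source gadget one at a time into the central configuration $X_0\cup Y_0\cup F_X\cup F_Y$, each intermediate set remaining a valid biclique because the fillers absorb the missing adjacencies, and then move them out of the central region into the target gadget. For the reverse direction I argue the contrapositive: if $B$ has no $K_{k,k}$, then no feasible $2i$-set that mixes vertices of the central region can span $K_{i,i}$ in the intended way, the source and target gadgets are rigid enough that their tokens cannot leave except through the central region, and therefore $V_s$ and $V_t$ lie in different components of the reconfiguration graph, so the answer is $\NO$.

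The main obstacle is precisely this reverse direction, i.e.\ establishing the \emph{rigidity/bottleneck} property. I must rule out unintended feasible intermediate configurations that let tokens leak between regions without ever witnessing a $K_{k,k}$ --- for instance bicliques that split $X$ across both sides, or that place fillers on the ``wrong'' side, or that achieve a balanced split of the complement's components in some unplanned way. Controlling this requires choosing the filler adjacencies and the anchoring edges of the source and target gadgets carefully, so that \emph{every} feasible $2i$-set is forced into one of a small number of canonical shapes, and then verifying that the admissible single-token moves between canonical shapes are exactly those used above. Once this case analysis is complete, the equivalence between the reconfiguration instance and the balanced-biclique instance follows, yielding NP-hardness even when $i=j$.
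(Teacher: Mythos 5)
Your high-level strategy is the same as the paper's: reduce from {\sc balanced complete bipartite subgraph}, plant a rigid source biclique and a rigid target biclique, and force every reconfiguration sequence between them to pass through a $(k,k)$-biclique of the input graph. But there is a genuine gap: the construction is never actually given. The source and target gadgets are only described as ``rigid,'' the ``anchoring edges'' wiring them to the central region are left unspecified, and the entire reverse direction --- the bottleneck/rigidity lemma --- is explicitly deferred (``once this case analysis is complete\ldots''). That case analysis \emph{is} the proof: with the wrong wiring, tokens can leak between regions without ever witnessing a $K_{k,k}$, and nothing in the proposal rules this out. A secondary point: the fillers $F_X, F_Y$ are an unnecessary degree of freedom. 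Since $i$ and $j$ are parameters of the property that the reduction is free to choose, you can simply take $i=j=k$ and dispense with fillers entirely; padding the central biclique solves no problem that was not self-inflicted.

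For comparison, the paper's construction resolves exactly the difficulty you postpone, and very simply. It adds two disjoint $(k,k)$-bicliques $G_1$ (parts $A_1,B_1$) and $G_2$ (parts $A_2,B_2$), joins $B_1$ completely to $A$ and $B$ completely to $A_2$, where $(A,B)$ is the bipartition of the input graph, and sets $V_s=V(G_1)$, $V_t=V(G_2)$. The resulting graph $G^\prime$ is bipartite with parts $A_1\cup A\cup A_2$ and $B_1\cup B\cup B_2$, and its adjacency blocks form a path $A_1$ -- $B_1$ -- $A$ -- $B$ -- $A_2$ -- $B_2$. Bipartiteness forces every solution to place exactly $k$ tokens in each part (a spanning $K_{k,k}$ is connected, so its sides must respect the bipartition), which kills your worry about sets that ``split $X$ across both sides'' for free; the path-like block structure then forces any sequence from $V(G_1)$ to $V(G_2)$ to pass through a solution lying entirely inside $A\cup B$, i.e., a $(k,k)$-biclique of $G$. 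Conversely, the complete joins $B_1$--$A$ and $B$--$A_2$ let tokens migrate one at a time through such a biclique when one exists. Your proposal becomes a proof only after you supply gadgets and wiring with these properties and carry out the canonical-shapes analysis you outline; as written, the key lemma is assumed rather than established.
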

		\begin{proof}
			We give a polynomial-time reduction from the {\sc balanced complete bipartite subgraph} problem, defined as follows~\cite{GJ79}.
			Given a bipartite graph $G$ and a positive integer $k$, the {\sc balanced complete bipartite subgraph} problem is to determine whether or not $G$ contains a $(k,k)$-biclique as a subgraph; this problem is known to be NP-hard~\cite{GJ79}.
			
			Suppose that $(G,k)$ is an instance of {\sc balanced complete bipartite subgraph}, where $G$ is a bipartite graph with bipartition $(A,B)$.
			Then, we construct a corresponding instance $(G^\prime,V_s,V_t)$ of the spanning variant under the TJ rule for the property ``a graph is a $(k,k)$-biclique.''
			We first construct a graph $G^\prime$. (See \figurename~\ref{fig:biclique_span}.) 
			We add to $G$ two new $(k,k)$-bicliques $G_1$ and $G_2$; let $(A_1, B_1)$ be the bipartition of $G_1$, and $(A_2,B_2)$ be that of $G_2$.
			We then add edges between any two vertices $x \in B_1$ and $y \in A$, and between any two vertices $x \in B$ and $y \in A_2$.
			Therefore, $G^\prime[B_1 \cup A]$ and $G^\prime[B \cup A_2]$ are bicliques in $G^\prime$.
			This completes the construction of $G^\prime$.
			We then set $V_s = V(G_1)$ and $V_t = V(G_2)$.
			Then, $V_s$ and $V_t$ are solutions, since $G[V_s]$ and $G[V_t]$ contain $(k,k)$-bicliques $G_1$ and $G_2$, respectively.
			In this way, 
			the corresponding instance can be constructed in polynomial time.
			
			By the construction of $G^\prime$, any reconfiguration sequence between $V_s = V(G_1)$ and $V_t = V(G_2)$ must pass through a $(k,k)$-biclique of $G$. 
			Therefore, the theorem follows. 
		\end{proof}
		\begin{figure}[t]
			\centering
			\includegraphics[width=0.50\linewidth]{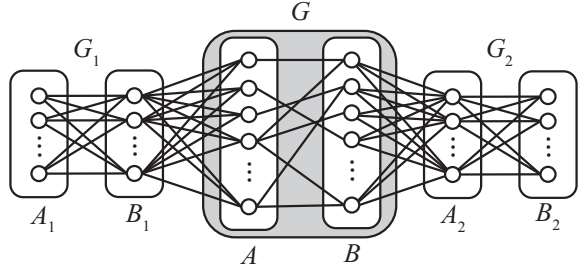}
			\caption{Reduction for the property ``a graph is a $(k,k)$-biclique.''}
			\label{fig:biclique_span}
		\end{figure}

		We now give a polynomial-time algorithm solving the spanning variant for a fixed constant $i \ge 1$.
		\begin{theorem} \label{the:spanningbicliquesolve}
			For the property ``a graph is an $(i,j)$-biclique,'' the spanning variant of {\sc subgraph reconfiguration} under the TJ rule is solvable in polynomial time when $i \ge 1$ is a fixed constant.
		\end{theorem}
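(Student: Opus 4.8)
The plan is to exploit that, with $i$ fixed, the ``small side'' of the biclique comes from a polynomial-size family, while the ``large side'' is essentially free to move. We may assume $i \le j$: otherwise $k := i+j \le 2i = O(1)$, the solution size is constant, and Theorem~\ref{the:xp} already solves the instance in polynomial time. First I would record the feasibility characterization for the spanning variant: a vertex subset $V'$ with $|V'| = k$ is feasible if and only if it admits a partition into a part $S$ with $|S| = i$ and a part $Y$ with $|Y| = j$ such that every vertex of $Y$ is adjacent to every vertex of $S$, equivalently $Y \subseteq N(S)$, where $N(S) := \bigcap_{u \in S} N(u)$ denotes the common neighborhood. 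I call such an $S$ a \emph{valid small side}; the crucial point is that there are only $O(n^i)$ candidate small sides, which is polynomial because $i$ is a fixed constant, and each $N(S)$ together with the test $|N(S)| \ge j$ is computable in polynomial time.

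Next I would establish that the large side is freely reconfigurable. For a valid small side $S$, let the \emph{cloud} $\mathcal{C}_S$ be the set of all feasible solutions of the form $S \cup Y$ with $Y \subseteq N(S)$ and $|Y| = j$. Keeping $S$ fixed, any two members of $\mathcal{C}_S$ are reachable from each other under the TJ rule by moving the large-side vertices one at a time: each intermediate set is again $S$ together with a size-$j$ subset of $N(S)$, hence feasible. Thus every cloud is internally connected in the reconfiguration graph.

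The core of the argument is then an auxiliary graph $H$ whose node set is the family of valid small sides (of size $O(n^i)$), with an edge between $S$ and $S'$ precisely when their clouds are adjacent in the reconfiguration graph, i.e.\ when either (i) some single feasible solution lies in $\mathcal{C}_S \cap \mathcal{C}_{S'}$, or (ii) there exist $V \in \mathcal{C}_S$ and $V' \in \mathcal{C}_{S'}$ differing by a single token jump. The key lemma I would prove is that $(G, V_s, V_t)$ is a $\YES$-instance if and only if some valid small side of $V_s$ and some valid small side of $V_t$ lie in the same connected component of $H$. The forward direction follows by reading off, along any reconfiguration sequence, a cloud containing each solution and noting that consecutive solutions yield $H$-adjacent small sides; the backward direction follows by realizing an $H$-path, moving freely inside each cloud (by the previous paragraph) and crossing each $H$-edge either by reinterpreting a shared solution or by performing the single witnessing token jump. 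Finally, determining each edge of $H$ is a polynomial-time check: condition (i) reduces to verifying $S' \setminus S \subseteq N(S)$, $S \setminus S' \subseteq N(S')$, and the counting inequality $|(N(S) \cap N(S')) \setminus (S \cup S')| \ge k - |S \cup S'|$, while condition (ii) is a bounded case analysis parameterized by $|S \cap S'|$ and the adjacencies among the vertices of $S \cup S'$, each case again reducing to a counting condition on common neighborhoods. With $O(n^{2i})$ pairs to test and a breadth-first search on $H$, and with only $O(k^i)$ small sides of $V_s$ and of $V_t$ to consult, the whole procedure runs in polynomial time for fixed $i$.

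The main obstacle I anticipate is condition (ii): because a feasible vertex set may admit several distinct bipartitions witnessing that it is an $(i,j)$-biclique, deciding exactly when two clouds are one move apart requires carefully enumerating how the small side can change under a single token jump, so that no legitimate transition is missed and no spurious one is added. This non-uniqueness of the witnessing bipartition is precisely what makes the spanning variant subtler than the edge variant, and it is where the proof must be most careful; everything else is either a direct counting argument or a routine application of the free large-side reconfiguration.
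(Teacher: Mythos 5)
Your proposal is correct and follows essentially the same approach as the paper's proof: the same preliminary reduction (if $j \le i$ the solution size is constant and Theorem~\ref{the:xp} applies), the same auxiliary graph over the $O(n^i)$ hub sets (your ``valid small sides''), the same observation that each cloud $\mathcal{S}(H)$ is internally connected under TJ, the same key lemma equating reachability with connectivity in the auxiliary graph via the same two-directional argument, and the same $O(n^{2i+O(1)})$ running-time analysis. The only difference is presentational: where you split the cloud-adjacency test into a shared-solution condition and a one-jump case analysis (flagging the latter as the delicate step), the paper consolidates both into four explicit cardinality conditions on $H_a$, $H_b$, $C(H_a)$, $C(H_b)$ checkable in $O(n)$ time per pair.
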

		\begin{proof}
			We give such an algorithm. 
			We assume without loss of generality that $j > i$; otherwise both $i$ and $j$ are fixed constants, and hence such a case can be solved in polynomial time by Theorem~\ref{the:xp}.
			
			We will refer to the $i$ vertices in the bounded-size part of the biclique as {\em hubs}, and the $j$ vertices in the other part as {\em terminals}.  
			%
			Let $H \subseteq V(G)$ be an arbitrary vertex subset such that $|H|=i$.
			We denote by $C(H)$ the set of all common neighbors of $H$ in $G$.
			Notice that, if $|C(H)| \ge j$, then the subgraph represented by $H \cup C(H)$ contains at least one $(i,j)$-biclique whose hub set is $H$.
			We denote by $\mathcal{S}(H)$ the set of all solutions that contain $(i,j)$-bicliques with the hub set $H$. 
			We know that $\mathcal{S}(H) = \emptyset$ if $|C(H)| < j$; otherwise $H \cup T$ is in $\mathcal{S}(H)$ for any subset $T \subseteq C(H)$ such that $|T|=j$.
			It should be noted that a solution in the spanning variant is simply a vertex subset $V^\prime$ of $V(G)$, and there is no restriction on how to choose a hub set from $V^\prime$. 
			(For example, if a solution $V^\prime$ induces a clique of size five, then there are ten ways to choose a hub set from $V^\prime$ for $(2,3)$-bicliques.)
			Therefore, 
			$\mathcal{S}(H) \cap \mathcal{S}(H^\prime) \neq \emptyset$ may hold for distinct hub sets $H,H^\prime$.
			
			We describe two key observations in the following.
			The first one is that for a hub set $H$, any two solutions $V_a,V_b \in \mathcal{S}(H)$ are reconfigurable because we can always move vertices in $V_a \setminus V_b$ into ones in $V_b \setminus V_a$ one by one.
			The second one is that for any two distinct hub sets $H_a$ and $H_b$, if there exist $V_a \in \mathcal{S}(H_a)$ and $V_b \in \mathcal{S}(H_b)$ such that $|V_a \setminus V_b| \le 1$ and $|V_b \setminus V_a| \le 1$ (this means that $V_a$ and $V_b$ are reconfigurable by one reconfiguration step, or $V_a = V_b$), then all pairs of solutions in $\mathcal{S}(H_a) \cup \mathcal{S}(H_b)$ are reconfigurable.

			Based on these observations, we construct an {\em auxiliary graph} $A$ for a given instance $(G, V_s, V_t)$, as follows.
			Each node in $A$ corresponds to a set $H$ of $i$ vertices (hubs) in the input graph $G$ such that 
			$|C(H)| \ge j$; we represent the node in $A$ simply by the corresponding hub set $H$.
			Two nodes $H_a$ and $H_b$ are adjacent in $A$ if there exist $V_a \in \mathcal{S}(H_a)$ and $V_b \in \mathcal{S}(H_b)$ such that $|V_a \setminus V_b| \le 1$ and $|V_b \setminus V_a| \le 1$.
			Let $H_s$ and $H_t$ be any two nodes such that $V_s \in \mathcal{S}(H_s)$ and $V_t \in \mathcal{S}(H_t)$, respectively.
			Then, we claim that there is a reconfiguration sequence between $V_s$ and $V_t$ if and only if there is a path in $A$ between $H_s$ and $H_t$.

			We first suppose that there is a path $\mathcal{P}$ in $A$ between $H_s$ and $H_t$. 
			We know that any two consecutive nodes $H_a$ and $H_b$ in $\mathcal{P}$ are adjacent in $A$.
			Then, as we mentioned above, all pairs of solutions in $\mathcal{S}(H_a) \cup \mathcal{S}(H_b)$ are reconfigurable.
			Since $V_s \in \mathcal{S}(H_s)$ and $V_t \in \mathcal{S}(H_t)$, we can thus conclude that there is a reconfiguration sequence between $V_s$ and $V_t$.
			
			We now suppose that there exists a reconfiguration sequence $\mathcal{R}$ between $V_s$ and $V_t$.
			For each solution $V_a$ in $\mathcal{R}$ except for $V_s$ and $V_t$, we choose an arbitrary node $H_a$ in $A$ which satisfies $V_a \in \mathcal{S}(H_a)$.
			Consider any two consecutive solutions $V_a$ and $V_b$ in $\mathcal{R}$.
			Then, by the construction of $A$, the chosen nodes $H_a$ and $H_b$ are adjacent in $A$ (or sometimes $H_a = H_b$) because $|V_a \setminus V_b| =1$ and $|V_b \setminus V_a| = 1$.
			In this way, we can ensure the existence of a desired path in $A$.
			
			The running time of the algorithm depends on the size $i$ of a hub set.  
			Let $n$ be the number of vertices in $G$. 
			The size of the node set of $A$ is in $O(n^i)$.
			For any two nodes $H_a$ and $H_b$ in $A$, we can determine in $O(n)$ time whether there is an edge between them by checking that all of the following four conditions hold or not:
			\begin{listing}{(b)}
				\item[\rm (a)] $|(H_a \cup C(H_a)) \cap (H_b \cup C(H_b))| \geq i+j-1${\rm ;}
				\item[\rm (b)] $|H_a \setminus (H_b \cup C(H_b))| \leq 1${\rm ;}
				\item[\rm (c)] $|H_b \setminus (H_a \cup C(H_a))| \leq 1${\rm ;} and
				\item[\rm (d)] $|H_a \cup H_b| \leq i+j+1$.
			\end{listing}
			Note that, since we have assumed that $i<j$, condition (d) is always satisfied.
			Therefore, we take $O(n^{2i+1})$ time to construct $A$ and to check whether the nodes corresponding to $V_s$ and $V_t$ are connected. 
		\end{proof}

\subsection{Diameter-two graph}\label{sec:diameter}
	In this subsection, we consider the property ``a graph has diameter at most two.''
	Note that the induced and spanning variants are the same for this property. 
	\begin{theorem} \label{the:diamtwo}
	Both induced and spanning variants of {\sc subgraph reconfiguration} under the TS rule are PSPACE-complete for the property ``a graph has diameter at most two.''
	\end{theorem}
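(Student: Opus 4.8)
The plan is to show membership in PSPACE and then PSPACE-hardness by a polynomial-time reduction. For membership, note that a reconfiguration sequence can be verified while storing only the current vertex subset, which has polynomial size, and checking whether a polynomial-size induced subgraph has diameter at most two is polynomial; hence the problem is in NPSPACE $=$ PSPACE. As the subsection already observes, for $\pro = $ ``a graph has diameter at most two'' the induced and spanning variants coincide: the induced subgraph $G[V']$ has the most edges among all spanning subgraphs on $V'$ and therefore the smallest diameter, so some spanning subgraph of $G[V']$ has diameter at most two exactly when $G[V']$ itself does. Thus it suffices to argue about one variant. For hardness I would reduce from {\sc independent set reconfiguration} under the TS rule, which is PSPACE-complete~\cite{KMM12}; here a single TS move slides a token from a vertex $v$ to an $H$-neighbor $v'$, and every intermediate vertex set must remain independent.

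Given an instance $(H,V_s,V_t)$, the natural attempt is to build $G'$ with one \emph{element vertex} $x_v$ per vertex $v$ of $H$, together with an always-present \emph{compatibility backbone} whose role is to realize the property ``$G'[S]$ has diameter at most two iff the chosen element vertices form an independent set of $H$.'' Concretely, one introduces a clique of backbone vertices, one $b_e$ for each non-edge $e$ of $H$, with $x_v\sim b_e$ iff $v\in e$; then two \emph{compatible} elements $x_u,x_v$ (with $uv\notin E(H)$) acquire the common neighbor $b_{uv}$ and sit at distance two, whereas a \emph{conflicting} pair ($uv\in E(H)$) has no common backbone vertex and must be pushed to distance three by a conflict gadget. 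The solution size is set to $|B|+|V_s|$, and $V'_s,V'_t$ are the backbone together with the element vertices of $V_s,V_t$. A TS slide $x_v\to x_{v'}$ along a dedicated sliding edge is then intended to mirror exactly one token move of the source, so that a reconfiguration sequence of independent sets of $H$ replays move-by-move on the element vertices, and conversely.

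The correctness argument splits as usual: the forward direction replays each source move and checks that the resulting selection, with the backbone, still has diameter at most two, while the converse needs a lemma in the spirit of Lemma~\ref{lem:path_tj_pro} asserting that every solution reachable from $V'_s$ under TS retains the full backbone and exactly one element per slot, so that $V'\cap\{x_v\}$ always projects onto an independent set of $H$. \textbf{The main obstacle is controlling the diameter-two condition, which is dangerously easy to satisfy, so as to rule out spurious solutions.} There are two intertwined difficulties. First, one must ensure that no common neighbor accidentally creates a shortcut placing a conflicting pair at distance two; since a non-edge pair can be bridged by \emph{any} third selected vertex adjacent to both, the conflict gadget must be engineered so that this never happens, and the backbone vertices must be shown \emph{frozen} — their removal must strictly raise the diameter, so no backbone token can slide away to free up capacity. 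Second, the very edges that let a token slide between element vertices must not themselves shortcut a conflicting pair, which is exactly why the theorem is stated only for TS and (unlike the reductions of Theorems~\ref{the:path_tj} and~\ref{the:cycle_tj}) need not hold verbatim for TJ: a jumping token could leave the backbone and re-enter elsewhere, bypassing the freezing argument. Making the freezing and no-shortcut claims precise — most likely via a dedicated structural lemma, and possibly by attaching immovable private gadgets to the backbone vertices — is the heart of the proof.
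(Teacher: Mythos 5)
Your PSPACE-membership argument and your observation that the induced and spanning variants coincide for this property are both correct (removing edges never decreases distances, so $G[V']$ has diameter at most two exactly when some spanning subgraph of it does). The hardness part, however, contains a genuine gap that you yourself flag but do not resolve, and it is fatal to the reduction as described. To simulate a TS move of {\sc independent set reconfiguration} you must place an edge $x_v x_{v'}$ in $G'$ for every edge $vv' \in E(H)$ (tokens in the source problem slide along edges of $H$). But those are precisely the \emph{conflicting} pairs that your construction needs to push to distance at least three: with the sliding edges present, a selection containing both $x_u$ and $x_v$ for $uv \in E(H)$ has these two vertices at distance one, every compatible pair at distance at most two via the backbone, and the backbone itself a clique, so $G'[S]$ can easily have diameter at most two even though the element vertices do not form an independent set of $H$. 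The backward direction of the reduction therefore collapses: reachable diameter-two solutions need not project to independent sets. The ``conflict gadget'' that is supposed to restore distance three is never defined, and it cannot coexist with the sliding edges in the form you describe; likewise, the claim that backbone tokens are ``frozen'' under TS is asserted as necessary but no mechanism enforcing it is given. What you have is a plan whose two requirements (slidability and distance-three separation of conflicting pairs) directly contradict each other.

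The paper sidesteps this tension by choosing a source problem that is natively compatible with small diameter: {\sc clique reconfiguration} under the TS rule, which is PSPACE-complete~\cite{IOO15}. The construction takes two copies $L,R$ of $G$ joined by a perfect matching of ``connecting edges'' (the prism over $G$), and encodes a clique $V_s$ as the doubled set $\{l_i, r_i \mid v_i \in V_s\}$ with exactly one vertex removed, creating a single \emph{exposed} vertex. The key structural facts are that (i) if exposed vertices occur on both sides $L$ and $R$, the induced graph has diameter more than two, and (ii) if two fully doubled pairs $\{l_i,r_i\},\{l_j,r_j\}$ are present but $v_iv_j \notin E(G)$, then $l_i$ and $r_j$ have no common neighbor in the selection, again forcing diameter at least three. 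These facts make every reachable solution a doubled clique with one exposed vertex, so no frozen gadgets are needed at all, and each TS move of the clique instance is simulated by two TS moves (relocate the exposed vertex, then slide on the $L$-side). If you want to salvage your approach, the lesson is to reduce from a problem whose feasible configurations are already low-diameter objects rather than independent sets, whose defining constraint (non-adjacency) is exactly what the diameter-two property cannot see once sliding edges are added.
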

	\begin{proof}
		Since the induced variant and the spanning variant are the same for this property, it suffices to show the PSPACE-hardness only for the induced variant. 
		We give a polynomial-time reduction from the {\sc clique reconfiguration} problem, which is the induced variant (also the spanning variant) of {\sc subgraph reconfiguration} for the property ``a graph is a clique.''
		This problem is known to be PSPACE-complete under both the TJ and TS rules~\cite{IOO15}, and we give a reduction from the problem under the TS rule.
		
		\begin{figure}[t]
			\centering
			\includegraphics[width=0.45\linewidth]{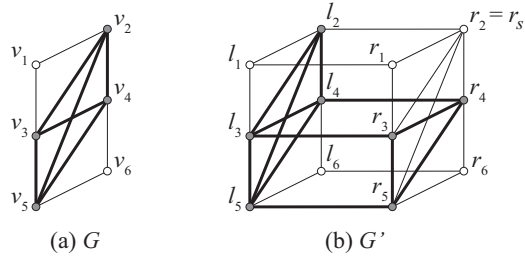}
			\caption{Reduction for the property ``a graph has diameter at most two.'' The vertices of $V_s$ in $G$ and of $V_s^\prime$ in $G^\prime$ are depicted by gray vertices, where $r_s = r_2$.}
			\label{fig:diam2}
		\end{figure}
		
		Suppose that $(G,V_s,V_t)$ is an instance of {\sc clique reconfiguration} under the TS rule such that $|V_s| = |V_t| \geq 2$; otherwise it is a trivial instance.
		Then, we construct a corresponding instance $(G^\prime, V^\prime_s, V^\prime_t)$ of the induced variant under the TS rule. 
		Let $V(G) = \{ v_1,v_2,\ldots,v_n \}$, where $n = |V(G)|$.
		We form $G^\prime$ by making two copies of $G$ and adding edges between corresponding vertices of the two graphs. (See \figurename~\ref{fig:diam2}.) 
		More formally, the vertex set $V(G^\prime)$ is defined as $V(G^\prime) = L \cup R$, where $L = \{  l_i \mid v_i \in V(G) \}$ and $R= \{ r_i \mid v_i \in V(G) \}$, and the edge set $E(G^\prime)$ is defined as $E(G^\prime) = E_l \cup E_r \cup E_c$, where $E_l = \{ l_{i}l_{j} \mid v_{i}v_{j} \in E(G) \}$, $E_r = \{ r_{i}r_{j} \mid v_{i}v_{j} \in E(G) \}$ and $E_c = \{ l_{i}r_{i} \mid v_i \in V(G) \}$.
		For each $i \in \{1,2,\ldots,n\}$, we call $l_i$ and $r_i$ {\em corresponding vertices}, and an edge joining corresponding vertices a {\em connecting edge.}
		We then construct $V^\prime_s$ and $V^\prime_t$.
		We say that a vertex $v^\prime$ in a vertex subset $V^\prime \subseteq V(G^\prime)$ is {\em exposed} in $V^\prime$ if the corresponding vertex of $v^\prime$ does not belong to $V^\prime$.
		We construct $V^\prime_s$ and $V^\prime_t$ so that they each have exactly  one exposed vertex.
		Let $v_s \in V_s$ be an arbitrary vertex in $V_s$ and $v_t \in V_t$ be an arbitrary vertex in $V_t$.
		Then, we let $V^\prime_s = \{ l_i, r_i \mid v_i \in V_s \} \setminus \{ r_s \}$ and $V^\prime_t = \{ l_i, r_i \mid v_i \in V_t \} \setminus \{ r_t \}$.
		Note that $l_s$ and $l_t$ are the unique exposed vertices in $V^\prime_s$ and $V^\prime_t$, respectively.
		Since $V_s$ and $V_t$ form cliques in $G$, $G^\prime[V^\prime_s]$ and $G^\prime[V^\prime_t]$ have diameter at most two.  We have thus 
		constructed our corresponding instance in polynomial time. 
		
		Let $V^\prime$ be any subset of $V(G^\prime)$, and let $V^\prime_L = V^\prime \cap L$ and $V^\prime_R = V^\prime \cap R$.
		The key observation is that $G[V^\prime]$ has diameter more than two if both $V^\prime_L$ and $V^\prime_R$ contain exposed vertices in $V^\prime$. 
		We below prove that $(G,V_s,V_t)$ of {\sc clique reconfiguration} is a $\YES$-instance if and only if the corresponding instance $(G^\prime,V^\prime_s,V^\prime_t)$ of the induced variant under the TS rule is a $\YES$-instance.
		
		We first prove the only-if direction, supposing that
		there exists a reconfiguration sequence $\seq{V_s = V_0,V_1,\ldots,V_\ell=V_t}$ of cliques in $G$.
		Then, we show that $V^\prime_s$ is reconfigurable into $V^\prime_t$ by induction on $\ell$.
		If $\ell = 0$ and hence $V_s = V_t$, then we can obtain $V^\prime_t$ from $V^\prime_s$ by exchanging $r_t$ in $V^\prime_s$ with $r_s$ (or $V^\prime_s = V^\prime_t$ already holds).
		We then consider the case where $\ell \geq 1$.
		Let $\{ v_i \} = V_0 \setminus V_1$ and $\{ v_j \} = V_1 \setminus V_0$; note that $v_i$ and $v_j$ are adjacent in $G$ since we consider the TS rule.
		Then, we consider two vertex subsets $V^\dprime_0 = ( V^\prime_s \cup \{ r_s \} ) \setminus \{ r_i \}$ and $V^\prime_1 = ( V^\dprime_0 \cup \{ l_j \} ) \setminus \{ l_i \}$; note that $r_s$ and $r_i$ are adjacent in $G^\prime$ since $r_s, r_i \in V_0 = V_s$, and that $l_i$ and $l_j$ are adjacent in $G^\prime$ since $v_i$ and $v_j$ are adjacent in $G$.
		Notice that $V^\dprime$ and $V^\prime_1$ have distinct exposed vertices $l_i$ and $l_j$, respectively.
		By the construction of $G^\prime$, since $V_0$ and $V_1$ are cliques in $G$,  both $G^\prime[V^\dprime_0]$ and $G^\prime[V^\prime_1]$ have diameter at most two. 
		Then, the sequence $\seq{V^\prime_s,V^\dprime_0,V^\prime_1}$ is a reconfiguration sequence between $V^\prime_s$ and $V^\prime_1$.
		Therefore, by applying the induction hypothesis to $(G^\prime, V^\prime_1, V^\prime_t)$ and $(G,V_1,V_t)$, we obtain a reconfiguration sequence between $V^\prime_1$ and $V^\prime_t$. 
		Thus, we can conclude that $(G^\prime, V^\prime_s, V^\prime_t)$ is a $\YES$-instance.
		
		To prove the if direction, we now suppose that 
		there exists a reconfiguration sequence $\mathcal{V}^\prime = \seq{V^\prime_s=V^\prime_0,V^\prime_1,\ldots,V^\prime_\ell=V^\prime_t}$.
		Consider the case where $V^\prime_i \in \mathcal{V}^\prime$ has the exposed vertex in the side $L$, say $l_i \in V^\prime_i$, and hence $r_i \not\in V^\prime_i$; the other case is symmetric. 
		Because $G^\prime[V^\prime_{i+1}]$ must have diameter at most two (and hence it must have only one exposed vertex), we know that $V^\prime_{i+1}$ is obtained from $V^\prime_i$ by one of the following three moves 
		(1) a token on a vertex $r_k \in V^\prime_i$ is moved to $r_i$;
		(2) the token on $l_i$ is moved to its corresponding vertex $r_i$; or
		(3) the token on $l_i$ is moved to a vertex $l_j \in L$ which is not in $V^\prime_i$ and is adjacent to all vertices $V^\prime_i \cap L$.
		Notice that the other moves increase the number of exposed vertices or make the resulting graph have diameter more than two.
		Then, each $V^\prime_i \in \mathcal{V}^\prime$ induces a clique of size $|V_s| = |V_t|$ in either $L$ or $R$, and we can obtain a desired sequence of cliques between $V_s$ and $V_t$.  
	\end{proof}

	We note that the TS rule is critical in the reduction of Theorem~\ref{the:diamtwo}. 
	Under the TJ rule, 
there is no guarantee that we maintain a clique (and we cannot even guarantee that the resulting clique gets bigger). 

	\section{Conclusions and future work}

	The work in this paper initiates a systematic study of {\sc subgraph reconfiguration}.  
	Although we have identified graph structure properties which are harder for the induced variant than the spanning variant, it remains to be seen whether this pattern holds in general.  
	For the general case, questions of the roles of diameter and the number of subgraphs satisfying the property are worthy of further investigation.
%
	Another obvious direction for further research is an investigation into the fixed-parameter complexity of {\sc subgraph reconfiguration}.

	A natural extension of {\sc subgraph reconfiguration} is the extension from isomorphism of graph structure properties to other mappings, such as topological minors.  

\bibliographystyle{abbrv}
\bibliography{subgraph}

\providecommand{\noopsort}[1]{}
\begin{thebibliography}{10}

\bibitem{B13}
P.~Bonsma.
\newblock The complexity of rerouting shortest paths.
\newblock {\em Theoretical Computer Science}, 510:1--12, 2013.

\bibitem{BC09}
P.~Bonsma and L.~Cereceda.
\newblock Finding paths between graph colourings: {PSPACE}-completeness and
  superpolynomial distances.
\newblock {\em Theoretical Computer Science}, 410(50):5215--5226, 2009.

\bibitem{GJ79}
M.~R. Garey and D.~S. Johnson.
\newblock {\em Computers and Intractability: A Guide to the Theory of
  {NP}-Completeness}.
\newblock Freeman, 1979.

\bibitem{HD05}
R.~A. Hearn and E.~D. Demaine.
\newblock {PSPACE}-completeness of sliding-block puzzles and other problems
  through the nondeterministic constraint logic model of computation.
\newblock {\em Theoretical Computer Science}, 343(1--2):72--96, 2005.

\bibitem{H13}
J.~{\noopsort{Heuvel}van den Heuvel}.
\newblock The complexity of change.
\newblock {\em Surveys in Combinatorics 2013}, 409:127--160, 2013.

\bibitem{IDHPSUU11}
T.~Ito, E.~D. Demaine, N.~J.~A. Harvey, C.~H. Papadimitriou, M.~Sideri,
  R.~Uehara, and Y.~Uno.
\newblock On the complexity of reconfiguration problems.
\newblock {\em Theoretical Computer Science}, 412(12--14):1054--1065, 2011.

\bibitem{IOO15}
T.~Ito, H.~Ono, and Y.~Otachi.
\newblock Reconfiguration of cliques in a graph.
\newblock In {\em Proceedings of the 12th Annual Conference on Theory and
  Applications of Models of Computation}, pages 212--223, 2015.

\bibitem{KMM12}
M.~Kami\'nski, P.~Medvedev, and M.~Milani${\rm \check{c}}$.
\newblock Complexity of independent set reconfigurability problems.
\newblock {\em Theoretical Computer Science}, 439:9--15, 2012.

\bibitem{MNRSS17}
A.~E. Mouawad, N.~Nishimura, V.~Raman, N.~Simjour, and A.~Suzuki.
\newblock On the parameterized complexity of reconfiguration problems.
\newblock {\em Algorithmica}, 78(1):274--297, 2017.

\bibitem{M15}
M.~M{\"{u}}hlenthaler.
\newblock Degree-constrained subgraph reconfiguration is in {P}.
\newblock In {\em Proceedings of the 40th International Symposium on
  Mathematical Foundations of Computer Science}, pages 505--516, 2015.

\bibitem{N17}
N.~Nishimura.
\newblock Introduction to reconfiguration.
\newblock Preprints 2017090055, 2017.

\bibitem{WYA16}
K.~Wasa, K.~Yamanaka, and H.~Arimura.
\newblock The complexity of induced tree reconfiguration problems.
\newblock In {\em Proceedings of the 10th International Conference of Language
  and Automata Theory and Applications}, pages 330--342, 2016.

\bibitem{W14}
M.~Wrochna.
\newblock Reconfiguration in bounded bandwidth and tree-depth.
\newblock {\em Journal of Computer and System Sciences}, 93:1--10, 2018.

\end{thebibliography}


\end{document}